\documentclass[journal]{IEEEtran}
\IEEEoverridecommandlockouts
\usepackage{cite}
\usepackage{graphicx}
\graphicspath{{.../FIG/}}
\DeclareGraphicsExtensions{.pdf,.jpeg,.png,.webp,.jpg}
\usepackage{stfloats}
\usepackage[hidelinks]{hyperref}
\usepackage{amsmath,mathtools}
\usepackage{amsfonts,amssymb}
\usepackage{bm}
\usepackage{algorithm}
\usepackage{algorithmic}

\usepackage{array}
\renewcommand\arraystretch{1}
\usepackage{cuted} 
\usepackage{booktabs}
\usepackage{multirow}
\usepackage{tabularx}
\usepackage{longtable}
\usepackage{pifont}
\usepackage{mathrsfs}

\usepackage{url}

\usepackage{amsthm}

\newtheorem{remark}{Remark}

\newtheorem{example}{Example}
\newtheorem{theorem}{Theorem}[section]

\newtheorem{definition}[theorem]{Definition}
\newtheorem{corollary}[theorem]{Corollary}
\newtheorem{proposition}[theorem]{Proposition}
\usepackage{color}
\definecolor{BLUE}{RGB}{0,114,189}
\definecolor{RED}{RGB}{217,83,25}
\definecolor{YELLOW}{RGB}{237,177,32}
\definecolor{PURPLE}{RGB}{126,47,142}

\usepackage{lipsum}
\usepackage{amssymb}
\usepackage{booktabs}
\usepackage{enumerate}
\usepackage{setspace}
\usepackage{float}

\usepackage[caption=false,font=normalsize,labelfont=sf,textfont=sf]{subfig}
\usepackage{textcomp}
\usepackage{verbatim}

\usepackage{subfiles}

\usepackage{soul}
\usepackage{textcomp}
\usepackage{xcolor}
\definecolor{myblue}{RGB}{0,112,192}

\begin{document}

\title{Geometric Decentralized Stability Certificate of Power Electronics-Dominated Power Systems Covering Variable Operating Points} % <-this % stops a space

\author{Ruohan~Leng,~Linbin~Huang,~Liangxiao~Luo,~Huanhai~Xin,~Xiongfei~Wang,~and~Florian~Dörfler% <-this % stops a space
\vspace{-2.5mm}
\thanks{This work was supported by the National Natural
Science Foundation of China under Grant U24B6008. }
\thanks{Ruohan Leng, Linbin Huang, Liangxiao Luo, and Huanhai Xin are with the College of Electrical Engineering, Zhejiang University, Hangzhou 310027, China. (e-mail: {\{lengruohan, hlinbin, luolx, xinhh\}}@zju.edu.cn).}\\
\thanks{Xiongfei Wang is with the Department of Electrical Engineering, Tsinghua University, Beijing, China. (e-mail: xiongfei@tsinghua.edu.cn).}
\thanks{Florian D{\"o}rfler is with the Department of Information Technology and Electrical Engineering at ETH Z{\"u}rich, Switzerland. 
(e-mail: dorfler@ethz.ch).}
}% <-this % stops a space
\maketitle

%TODO *** ABSTRACT ***
\begin{abstract}

The integration of power converters is profoundly changing the power system dynamics and poses significant challenges for stability analysis. The dynamic interactions between the power grid and the heterogeneous converters are highly complex and difficult to analyze due to the curse of dimensionality. Moreover, system stability varies with the operating points, which are determined by the voltage magnitude, active power, and reactive power of each converter. This further complicates the analysis as it is difficult to enumerate and examine all the possible operating points. To tackle these challenges, this paper proposes a geometric decentralized stability certificate for power electronics (PE)-dominated power systems, which can simultaneously handle heterogeneous power converters and their variable operating points. The certificate can be checked in a decentralized and modular manner, and it is scalable for large-scale power systems.
Our approach is developed based on the concept of Davis-Wielandt (DW) shell and its projections, which can effectively visualize the characteristics of high-dimensional complex matrices. We investigate how the projections of the DW shell vary with operating points and how this variation can guide the search for worst-case operating conditions. We further propose an efficient algorithm to compute the stability margin and construct the certified operating regions. The effectiveness of the proposed method is validated through case studies on single-converter and 54-converter wind power systems.
%We test the effectiveness of the method by simulating a wind power system.

\end{abstract}

\vspace{0mm}
\begin{IEEEkeywords}
Decentralized stability analysis, DW shell, converters, power system dynamics, small-signal stability, \textbf{$\bm x$-$\bm z$ graph}.
\end{IEEEkeywords}

\vspace{-3mm}

%TODO *** 1 INTRODUCTION ***
\section{Introduction}

Due to the developments of renewable energy generation, high-voltage DC transmission, and energy storage systems, modern power systems are evolving toward power electronics (PE)-dominated power systems where PE converters are widely used for energy conversion~\cite{milano2018foundations,kroposki2017achieving,lin2025enhancing}. The stability analysis of PE-dominated power systems has long been considered an intricate task due to the complex dynamic interactions between the power grid and heterogeneous converters. Moreover, since power systems contain nonlinear dynamics, their small-signal stability highly depends on the operating points~\cite{zhang2021artificial}, i.e., the active and reactive power of the converters. For instance, it has been shown that a converter often has a lower stability margin at its maximum active power output~\cite{liu2024generalized}.

The operating points of a PE-dominated power system are difficult to enumerate due to the high-dimensional combinatorial nature of all converters' active and reactive power outputs. Hence, it is computationally challenging to examine system stability over all possible operating points~\cite{zou2020two}. It remains an open question how to ensure that a PE-dominated power system remains stable under all possible operating points.

Existing studies have investigated the impact of operating points on small-signal stability from several perspectives. For a single-converter system, instability may occur at low power outputs, indicating that stability can vary non-monotonically with the operating points~\cite{hossen2023hidden}. Full-operating-space analysis has also been developed by directly embedding the variables of operating points into the small-signal model and constructing a safe operating region~\cite{xie2025analysis}. For multi-converter systems with homogeneous devices, variable-operating-point impedance models~\cite{liu2023stability} and wind-speed-dependent admittance models~\cite{du2023stability} have been proposed to extend stability analysis to a wide range of operating conditions. Additionally, stability regions have also been constructed in the injection space to assess stability under varying power outputs~\cite{zhan2024injection}. However, these approaches typically focus on homogeneous converter dynamics or avoid the high-dimensional operating-point space through restrictive assumptions, and are therefore not scalable to heterogeneous PE-dominated power systems.
%require exploring the high-dimensional operating-point space in a combinatorial manner, and thus become computationally prohibitive for heterogeneous PE-dominated power systems.

Recently, the concept of decentralized stability certificates (DSCs) has gained considerable attention because it enables modular and scalable analysis of power system dynamics~\cite{vorobev2019decentralized,siahaan2024decentralized,chen2025unified}. The traditional passivity condition~\cite{moylan1977stability} can be viewed as an early example of DSCs, as system stability can be checked in a decentralized way: the system is stable if all the interconnected components are passive. However, the passivity condition usually does not hold over the whole frequency range~\cite{chen2025extended}, especially when synchronization dynamics are considered. In~\cite{huang2020h}, a DSC was developed based on the small-gain theorem, which can be further combined with the small-phase theorem to reduce the conservativeness, as investigated in~\cite{huang2024gain,cifelli2027decentralized}. The small-phase theorem extends the passivity condition by explicitly defining the matrix phases based on the numerical range, and the passivity condition can be considered as constraining the phases within $\pm 90^\circ$~\cite{chen2024phase}. There are also DSCs derived from the scaled relative graph (SRG), Davis-Wielandt (DW) shell and its projections~\cite{baron2025decentralized, feng2026unified, huang2025geometric}. 

These DSCs enable scalable system stability assessment by checking whether the characteristics of each device match those of the power network. This avoids constructing high-dimensional state-space matrices, supports interoperability among heterogeneous devices, and provides insights into device-grid interactions while reducing the need for full sharing of proprietary device models. As demonstrated in~\cite{huang2024gain}, the characteristics of the devices and the power network can be described by their admittance matrices, and the effect of operating points is reflected only in the device admittances. Since DSCs focus on the individual admittance matrix of each device, they break the combinatorial problem of all the devices' active and reactive power into the individual device's active and reactive power for stability analysis. For instance, Ref.~\cite{baron2027impact} accounts for the operating points using parameterized SRGs with an affine structure. However, such an affine structure has limitations in describing how different active and reactive power affect the devices' admittance matrix. 

This paper aims to develop an effective method to analyze and guarantee the small-signal stability of PE-dominated power systems under variable operating points. As a first step, we develop a DSC based on a unified DW-shell geometric viewpoint, covering the small-gain condition, the small-phase condition, the $x$-$z$ graph (the projection of DW shell onto the $x$-$z$ plane), and SRG. Our certificate extends the one in our previous work~\cite{huang2025geometric} by incorporating SRG, which can leverage the advantages of different geometric concepts. Unlike~\cite{baron2027impact} which assumes an affine structure, we consider a general dependence of the converter's admittance matrix on the operating points. Then, we define a geometric stability margin and the corresponding decentralized stability operating region based on the $x$-$z$ graph. Moreover, we develop an efficient algorithm to compute the stability margin and construct the certified operating region. We further investigate how the converter $x$-$z$ graph varies with operating points and use this geometric dependence to guide the search for worst-case operating conditions. Our method enables systematic stability assessment of PE-dominated power systems over admissible operating-point sets and provides guidance for converter control design.

The rest of this paper is organized as follows. Section~\ref{2} establishes the operating-point-dependent model of PE-dominated power systems. Section~\ref{3} develops the geometric decentralized stability certificate based on geometric graphs of complex matrices. Section~\ref{4} defines the decentralized stability operating region and presents its efficient construction and worst-case analysis. Section~\ref{5} validates the proposed method through simulations. Section~\ref{6} concludes the paper.

\section{Operating-Point-Dependent Modeling \\ of PE-Dominated Power Systems}\label{2}
In this section, we first specify the operating points of PE-dominated power systems, and then derive the operating-point-dependent admittance models of converters under PQ and PV control modes. The network dynamics are also derived and interconnected with the converter dynamics to obtain the closed-loop representation of the system.

\subsection{Operating Points of PE-Dominated Power Systems}

Consider a typical PE-dominated power system in which $N$ ($N \in \mathbb{Z}_{>0}$) converters are interconnected through the power network, as illustrated in Fig.~\ref{common_topology}. The converters may employ different control structures and parameters, and operate at different steady-state operating points. The operating point of the $i$-th converter is defined as
\begin{equation}
\mathcal{O}_i := [P_i,Q_i,V_i]^\top \in \mathbb{R}^3,
\end{equation}
where $P_i$, $Q_i$, and $V_i$ denote the active power, reactive power, and terminal voltage magnitude, respectively. The admissible set of operating points of the \(i\)-th converter is defined as
\begin{equation}
\label{eq:Omega_i}
\bar{\mathcal S}_i
\!:=\!
\left\{
\mathcal O_i\!:\!
P_i^2+Q_i^2\le \kappa_i^2,\ \!
\angle(P_i\!+\!\mathrm j Q_i)\!\in\!\Psi_i,\ \!
V_i\!\in\!\mathcal V_i
\right\},
\end{equation}
where \(\kappa_i\), \(\Psi_i\), and \(\mathcal V_i\) specify the admissible power-injection magnitude limit, angle range, and voltage set, respectively.

The operating point of the entire system is defined as
\begin{equation}
\label{eq:system_operating_point}
\mathcal O
:=
[\mathcal O_1^\top,\ldots,\mathcal O_N^\top]^\top
\in \mathbb R^{3N}.
\end{equation}
Accordingly, the admissible set of system operating points is
\begin{equation}
\label{eq:system_admissible_set}
\bar{\mathcal S}
:=
\left\{
\mathcal O:
\mathcal O_i\in\bar{\mathcal S}_i,\ i=1,\ldots,N
\right\}
=
\bar{\mathcal S}_1\times\cdots\times\bar{\mathcal S}_N .
\end{equation}

The small-signal stability of a multi-converter system depends on its operating point \(\mathcal{O}\), and the stability margin also changes with the operating point. For instance, the system may become unstable with variations of the operating point. However, it is difficult to examine each possible operating point in \(\bar{\mathcal{S}}\) due to the exploding dimension of this \(3N\)-dimensional operating-point space as the number of converters increases. This motivates the following decentralized stability analysis, where we develop stability conditions for each converter to satisfy. Our approach avoids directly dealing with the high-dimensional \(\mathcal{O}\), by instead focusing on each \(\mathcal{O}_i\).

\begin{figure}[!t]
	\centering
	\includegraphics[width=0.85\linewidth]{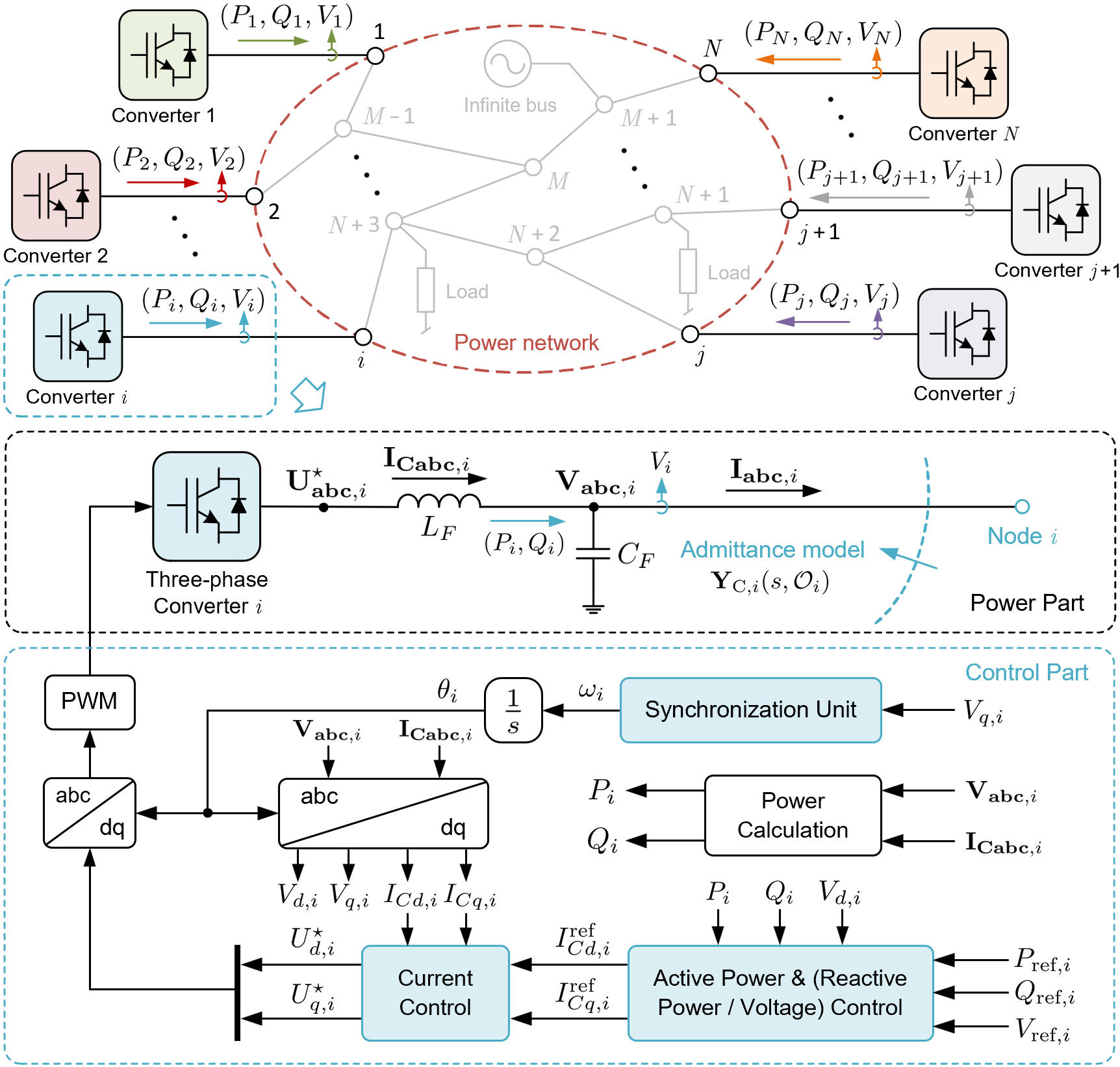}
	\vspace{-2mm}
	%\DeclareGraphicsExtensions.
	\caption{Illustrations of a PE-dominated power system and the control structure of the \(i\)-th converter. The converter can operate either in PQ or PV control mode, and the synchronization unit can be a PLL~\protect\cite{huang2024gain}.}
	\vspace{-4mm}
	\label{common_topology}
\end{figure}

\subsection{Operating-Point-Dependent Converter Dynamics}
By linearizing around an operating point, the small-signal dynamics of the $i$-th converter can be represented by the following admittance model in the frequency domain~\cite{yang2020placing,xin2025how}
\begin{equation}\label{eq:admittance_model}
    -\begin{bmatrix} I_{d,i} \\ I_{q,i} \end{bmatrix} = S_i e^{J\theta_i} {\bf Y}_{{\rm C},i}(s,\mathcal{O}_i) e^{-J\theta_i}\begin{bmatrix} V_{d,i} \\ V_{q,i} \end{bmatrix} \,,
\end{equation}
where $[ I_{d,i} , I_{q,i} ]^\top$ and $[ V_{d,i} , V_{q,i} ]^\top$ denote the output current and voltage \textit{perturbations} of the $i$-th converter in a global $dq$ coordinate, and $S_i$ represents the rated capacity of the $i$-th converter. The {steady-state angle difference} between the global $dq$ coordinate and converter’s local $dq$ coordinate is denoted by $\theta_i$, and $\renewcommand{\arraystretch}{0.7}
e^{J\theta_i} =
\begin{bmatrix}
{\scriptstyle \cos \theta_i} \ & {\scriptstyle -\sin \theta_i} \\
{\scriptstyle \sin \theta_i} \ & {\scriptstyle \cos \theta_i}
\end{bmatrix}$.
The $2\times2$ admittance matrix ${\bf Y}_{{\rm C},i}(s,\mathcal{O}_i)$ 
describes the converter dynamics in the local $dq$ coordinate; see~\cite{yang2020placing} and~\cite{xin2025how} for the detailed derivations.

\begin{figure*}[!t]
\vspace{-8mm}
\begin{equation}
\begin{split}
&{\bf Y}_{\mathrm{C},i}^{\rm PQ}(s,\mathcal{O}_i)
=
\begin{bmatrix}\,
\tfrac{G_{I,i}
(s) \mathrm{PI}_{\mathrm{AC},i}(s) P_i + V_i Y_{\mathrm{VF},i}(s)}{V_i D_{p,i}(s)} &
-\tfrac{Q_i\left[\,s G_{I,i}(s) \mathrm{PI}_{\mathrm{AC},i}(s)
+\mathrm{PI}_{\mathrm{PLL},i}(s) D_{p,i}(s)\,\right]}%
{V_i D_{p,i}(s) S_i(s)} \\[8pt]
-\tfrac{Q_i G_{I,i}(s) \mathrm{PI}_{\mathrm{RC},i}(s)}{V_i D_{q,i}(s)} &
\quad\tfrac{s V_i Y_{\mathrm{VF},i}(s)
- P_i\left[\, s G_{I,i}(s) \mathrm{PI}_{\mathrm{RC},i}(s)
+\mathrm{PI}_{\mathrm{PLL},i}(s) D_{q,i}(s)\,\right]}%
{V_i S_i(s) D_{q,i}(s)}
\,\end{bmatrix} + {\bf Y}_{\mathrm{F},i}(s),
\\[5pt]
&{\bf Y}_{\mathrm{C},i}^{\rm PV}(s,\mathcal{O}_i)
=
\begin{bmatrix}
\,\tfrac{G_{I,i}(s) \mathrm{PI}_{\mathrm{AC},i}(s) P_i + V_i Y_{\mathrm{VF},i}(s)}{V_i D_{p,i}(s)} & \quad\quad
-\tfrac{Q_i\left[\,s G_{I,i}(s) \mathrm{PI}_{\mathrm{AC},i}(s)
+\mathrm{PI}_{\mathrm{PLL},i}(s) D_{p,i}(s)\,\right]}%
{V_i D_{p,i}(s) S_i(s)} \,\,\\[8pt]
-\,G_{I,i}(s)\,\mathrm{PI}_{\mathrm{VC},i}(s) &
\tfrac{s V_i Y_{\mathrm{VF},i}(s) - \mathrm{PI}_{\mathrm{PLL},i}(s) P_i}{V_i S_i(s)}
\end{bmatrix} + {\bf Y}_{\mathrm{F},i}(s).
\end{split}
\label{eq:Y0_PQ_PV}
\end{equation}
\vspace{-6mm}
\end{figure*}

The expression of ${\bf Y}_{{\rm C},i}(s,\mathcal{O}_i)$ is determined by the converter's AC-side dynamics and control architecture, and depends on the steady states of the $dq$-axis voltage and current, which are $V_{d0,i}$, $V_{q0,i}$, $I_{d0,i}$, and $I_{q0,i}$. We then explicitly parameterize ${\bf Y}_{{\rm C},i}(s,\mathcal{O}_i)$ as a function of $\mathcal{O}_i$ by considering that $P_i = V_{d0,i} I_{d0,i}$, $Q_i = -V_{d0,i} I_{q0,i}$, and $V_i = V_{d0,i}$, since $V_{q0,i}=0$ under steady-state conditions. The resulting expressions of ${\bf Y}_{{\rm C},i}(s,\mathcal{O}_i)$ are given in~\eqref{eq:Y0_PQ_PV} as ${\bf Y}_{{\rm C},i}^{\rm PQ}(s,\mathcal{O}_i)$ and ${\bf Y}_{{\rm C},i}^{\rm PV}(s,\mathcal{O}_i)$, corresponding to the PQ and PV control modes, respectively. Note that, for fixed \(V_i\), both expressions are affine in \((P_i,Q_i)\), which will be used later in Section~\ref{4}. Here, $G_{I,i}(s)$ represents the current-loop tracking dynamics and $Y_{\mathrm{VF},i}(s)$ the voltage-feedforward dynamics; $\mathrm{PI}_{\mathrm{AC},i}(s)$, $\mathrm{PI}_{\mathrm{RC},i}(s)$, $\mathrm{PI}_{\mathrm{VC},i}(s)$, and $\mathrm{PI}_{\mathrm{PLL},i}(s)$ denote the PI controllers for active-power, reactive-power, voltage, and PLL control, respectively. For compactness, we define the common denominators $D_{p,i}(s)=1+G_{I,i}(s)\mathrm{PI}_{\mathrm{AC},i}(s)V_i$, $D_{q,i}(s)=1+G_{I,i}(s)\mathrm{PI}_{\mathrm{RC},i}(s)V_i$, and $S_i(s)=s+\mathrm{PI}_{\mathrm{PLL},i}(s)V_i$; $\renewcommand{\arraystretch}{0.7}{\bf Y}_{\mathrm{F},i}(s)=\tfrac{C_{F,i}}{\omega_0}\!
\begin{bmatrix}
{\scriptstyle s} \ & {\scriptstyle -\omega_0} \\
{\scriptstyle \omega_0} \ & {\scriptstyle s}
\end{bmatrix}$ captures the dynamics of the filter capacitor $C_{F,i}$, where $\omega_0$ is the nominal frequency.

  \subsection{Closed-Loop Modeling of PE-Dominated Power System}
Based on the $i$-th converter admittance model in~\eqref{eq:admittance_model} and its operating-point-dependent reformulation ${\bf Y}_{{\rm C},i}(s,\mathcal{O}_i)$, we extend it to include the dynamics of all $N$ converters. For ease of notation, let
${\bf I}=[I_{d,1},I_{q,1},\dots,I_{d,N},I_{q,N}]^\top$ and
${\bf V}=[V_{d,1},V_{q,1},\dots,V_{d,N},V_{q,N}]^\top$, respectively. Then, we have
\begin{equation}\label{eq:admittance_devices}
    -{\bf I}
    =
    [\,
    ({\bf S}\otimes I_2)\,
    e^{J{\boldsymbol\theta}}\,
    {\bf Y}_{\mathrm{C}}^{N}(s, \mathcal{O})\,
    e^{-J{\boldsymbol\theta}}
    \,]
    {\bf V} ,
\end{equation}
where ${\bf S} = {\rm diag}\{S_1,\dots,S_N\} \in \mathbb{R}^{N\times N}$ 
is the rated capacity matrix, and 
$e^{J{\boldsymbol\theta}} = {\rm diag}\{e^{J\theta_1},\dots,e^{J\theta_N}\} 
\in \mathbb{R}^{2N\times 2N}$ represents the steady-state coordinate transformation. 
The block-diagonal transfer matrix 
${\bf Y}_{\mathrm{C}}^{N}(s, \mathcal{O}) \in \mathbb{R}(s)^{2N\times 2N}$ is given by
\begin{equation*}
\scalebox{1.0}{$
\begin{aligned}
{\bf Y}_{\mathrm{C}}^{N}(s, \mathcal{O})
 =
&\begin{bmatrix}
{\bf Y}_{{\rm C},1}(s,\mathcal{O}_1) & & \\
& \ddots & \\
& & {\bf Y}_{{\rm C},N}(s,\mathcal{O}_N)
\end{bmatrix}.
\end{aligned}
$}
\end{equation*}
%extended from ${\bf Y}_{{\rm C},i}(s,\mathcal{O}_i)$,

We next develop the transfer function matrix of the power network. Consider the network shown in Fig.~\ref{common_topology}, which consists of $N$ converter nodes, denoted by Nodes~$1 \sim N$, $M-N$ interior nodes ($M \in \mathbb{Z}_{>0},\, M \ge N$), denoted by Nodes~$N+1 \sim M$, and one common grounded node, denoted by Node~$M+1$. The dynamic equation of the line connecting Node~$i$ and Node~$j$ ($i,j \in \{1,\dots,M+1\}$) is
\begin{equation}\label{eq:edge_ij}
    \begin{bmatrix} I_{d,ij} \\ I_{q,ij} \end{bmatrix}
    = {\bf Y}_{ij}(s)
    \left(
        \begin{bmatrix} V_{d,i} \\ V_{q,i} \end{bmatrix}
        -
        \begin{bmatrix} V_{d,j} \\ V_{q,j} \end{bmatrix}
    \right),
\end{equation}
where $[I_{d,ij}, I_{q,ij}]^\top$ and $[V_{d,i}, V_{q,i}]^\top$ are the current and voltage vectors, respectively, defined in the global $dq$ frame, and ${\bf Y}_{ij}(s)$ is a $2\times2$ transfer function matrix characterizing, for instance, the line dynamics, which can be expressed as
\begin{equation}\label{eq:Y_ij}
    {\bf Y}_{ij}(s)
    = B_{ij}
      \begin{bmatrix}
        s/\omega_0 + \epsilon & -1 \\[1mm]
        1 & s/\omega_0 + \epsilon
      \end{bmatrix}^{-1}
    =: B_{ij} {\bf F}_{\epsilon}(s),
\end{equation}
where $B_{ij} = 1/X_{ij}$ is the line susceptance with $X_{ij}$ being the reactance, and $\epsilon$ denotes the network R/X ratio.

By assembling the line dynamics in~\eqref{eq:Y_ij}, the network admittance can be expressed via its grounded Laplacian matrix ${\bf B} \!\in\! \mathbb{R}^{M \times M}$, which encodes the network topology and line susceptances, where ${\bf B}_{ij} = - B_{ij}$ for $i \ne j$, and ${\bf B}_{ii} = \sum_{j=1}^{M+1} B_{ij}$. By eliminating the $M\!-\!N$ interior nodes via Kron reduction~\cite{dorfler2013kron},
we obtain the reduced Laplacian matrix ${\bf B}_{\rm r} \in \mathbb{R}^{N \times N}$.
The resulting network dynamics are described by the following
$2N \times 2N$ transfer function matrix:
\begin{equation}\label{eq:Y_grid}
    {\bf I} = [\;{\bf B}_{\rm r} \otimes{\bf F}_\epsilon(s)\; ] {\bf V} =:{\bf Y}_{\rm grid}(s) {\bf V},
\end{equation}
where ${\bf B}_{\rm r}$ is given by ${\bf B}_{\rm r} = {\bf B}_1 - {\bf B}_2{\bf B}_4^{-1} {\bf B}_3$, and 
    $$
        {\bf B}=: \begin{bmatrix} {\bf B}_1 \in \mathbb{R}^{N \times N} \quad & {\bf B}_2 \in \mathbb{R}^{N \times (M-N)} \\ {\bf B}_3 \in \mathbb{R}^{(M-N) \times N} \quad & {\bf B}_4 \in \mathbb{R}^{(M-N) \times (M-N)}\end{bmatrix}.
    $$

Following~\cite{huang2020h}, the small-signal dynamics of the multi-converter system can be represented by the closed-loop interaction between the converters’ dynamics in~\eqref{eq:admittance_devices} and the power network dynamics in~\eqref{eq:Y_grid} as 
\begin{equation}
    [\,(\mathbf S\!\otimes\! I_2)\,e^{J{\boldsymbol\theta}}\,
    \mathbf Y_{\mathrm C}^{N}(s,\mathcal{O})\,
    e^{-J{\boldsymbol\theta}}\,]\;\#\; \mathbf Y_{\rm grid}^{-1}(s),
\end{equation}
where \(\#\) denotes the feedback interconnection.

For the subsequent analysis, we re-scale the converters’ dynamics and the network dynamics following~\cite{huang2024gain} and obtain the equivalent closed-loop dynamics as
\begin{equation}\label{eq:rescaled_sys}
    [\,e^{J{\boldsymbol\theta}}\,\widetilde{\mathbf Y}_{\mathrm C}^{N}(s,\mathcal{O})\, e^{-J{\boldsymbol\theta}}\,]\;\#\; \widetilde{\mathbf Y}_{\rm grid}^{-1}(s),
\end{equation}
where the \(i\)-th block of \(\widetilde{\mathbf Y}_{\mathrm C}^{N}(s,\mathcal{O})\) is defined as
\begin{equation}\label{eq:conv_dy}
    \widetilde{\mathbf Y}_{{\mathrm C},i}(s,\mathcal{O}_i)
    := \mathbf Y_{{\mathrm C},i}(s,\mathcal{O}_i)\,{\bf F}_\epsilon^{-1}(s),
\end{equation}
for \(i \in \{1,\ldots,N\}\), and
\begin{equation}\label{eq:grid_dy}
    \widetilde{\mathbf Y}_{\rm grid}(s)
    = \mathbf S^{-1/2}\,\mathbf B_{\rm r}\,\mathbf S^{-1/2}\,\otimes I_2.
\end{equation}

As illustrated in Fig.~\ref{Fig_closed_loopYY}, 
the closed-loop system in~\eqref{eq:rescaled_sys} features a block-diagonal structure in $\widetilde{\mathbf Y}_{\mathrm C}^{N}(s,\mathcal{O})$, while $\widetilde{\mathbf Y}_{\rm grid}(s)$ is a positive definite constant matrix (independent of~$s$, and hereinafter denoted simply as $\widetilde{\mathbf Y}_{\rm grid}$). These properties facilitate the subsequent decentralized stability analysis.

    \begin{figure}[!t]
    \vspace{0mm}
	\centering
	\includegraphics[width=2.6in]{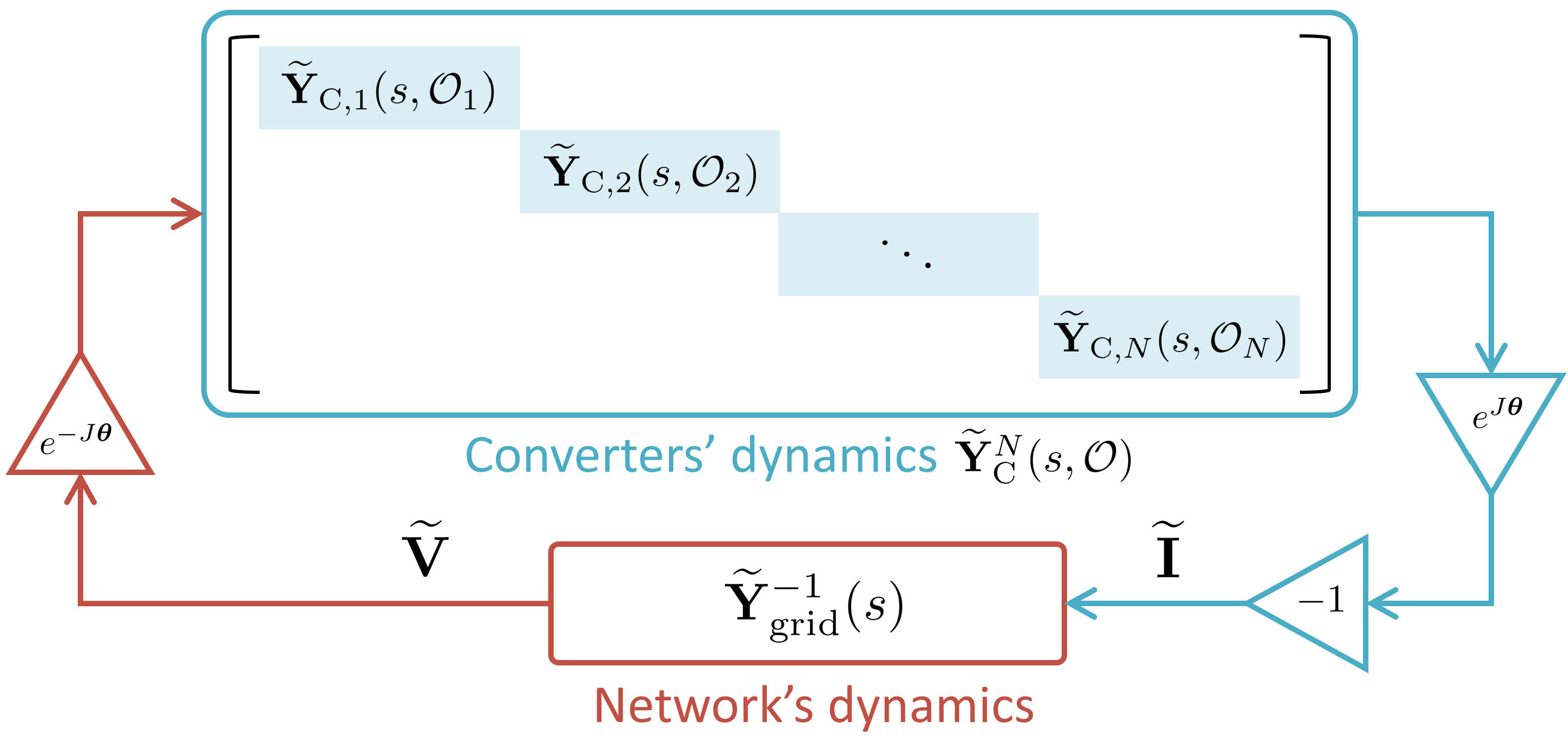}
	\vspace{-2mm}
	%\DeclareGraphicsExtensions.
	\caption{Closed-loop interaction between converters and power network.}
	\vspace{-4mm}
	\label{Fig_closed_loopYY}
    \end{figure}

\section{Geometric Decentralized Stability Certificate for PE-Dominated Power Systems} \label{3}
In this section, we first recall several geometric graphs of matrices, including the numerical range, SRG, DW shell and its $x$-$z$ projection. Building on these graphs, we then develop a geometric decentralized stability certificate for PE-dominated power systems under variable operating points.

\subsection{Geometric Features of Complex Matrices}
The gain and phase of transfer functions are classical frequency-domain quantities for certifying feedback stability~\cite{huang2024gain}, as exemplified by the small-gain~\cite{huang2020h} and passivity conditions~\cite{chen2025extended}. For matrix-valued dynamics, these quantities and their generalizations can be described by geometric graphs of matrices, which are briefly introduced below.

The \textbf{numerical range} of a complex matrix $A \in \mathbb{C}^{n \times n}$ is 
\begin{equation}\label{eq:WA}
W(A)=\{v^*Av: v\in \mathbb{C}^n, \|v\|=1\},
\end{equation}
which is a convex subset of $\mathbb{C}$ containing the eigenvalues of $A$. Here, $v^*$ denotes the conjugate transpose of $v$, and $\|v\|$ denotes the two-norm. If $0\notin W(A)$, then $A$ is \textit{sectorial}. For a sectorial $A$, there exists a nonsingular matrix $T$ and a diagonal unitary matrix $D$ such that $A=T^*DT$, referred to as the sectorial decomposition. Here $D$ is unique up to permutation, with all diagonal entries lying on an arc of the unit circle with length smaller than $\pi$. Then, the \textit{phases} of $A \in\mathbb{C}^{n\times n}$, denoted by
\begin{equation}
    \overline\phi(A)=\phi_1(A) \geq \dots \geq \phi_n(A)=\underline\phi(A), 
    \tag{\bf Phases}
\end{equation}
are defined as the phases of the eigenvalues of $D$ so that $\overline\phi(A)-\underline\phi(A)<\pi$~\cite{chen2024phase,wang2020phases}. The smallest phase $\underline\phi(A)$ and the largest phase $\overline{\phi}(A)$ of a sectorial matrix $A$ are the two supporting rays of $W(A)$, and the other phases lie in between.

As a counterpart, the gains (magnitudes) of a complex matrix $A\in\mathbb{C}^{n\times n}$ can be defined by its singular values
\begin{equation}
    \overline{\sigma}(A) =\sigma_1(A) \geq \dots \geq \sigma_n(A)=\underline{\sigma}(A).
    \tag{\bf Gains}
\end{equation}

The above concepts of ``phases'' and ``gains'' describe the properties of a complex matrix from different perspectives, and lead to different stability criteria of feedback interconnected dynamical systems, namely, the small gain theorem and the small phase theorem~\cite{skogestad2005multivariable, chen2024phase}. In recent years, the concept of the DW shell, tracing back to the works by Davis and Wielandt in~\cite{wielandt1955eigenvalues} and~\cite{davis1968shell}, has gained renewed interest thanks to its ability to simultaneously describe the gain and phase features of a complex matrix~\cite{zhang2025phantom,lestas_DW,Li_DW}.
To be specific, the \textbf{DW shell} of a complex matrix $A \in \mathbb{C}^{n \times n}$ is
\begin{equation}\label{eq:DWA}
DW(A)=\{(v^*Av,\|Av\|^2): v\in \mathbb{C}^n, \|v\|=1\},
\end{equation}
which extends the numerical range from the 2D complex plane to the 3D space, with the additional $z$-axis reflecting the matrix gains. Fig.~\ref{Fig_DW_shell_SRG} shows an example DW shell $DW(A)$, whose projection onto the $x$-$y$ plane gives the numerical range of $A$. Its top and bottom points along the $z$-axis correspond to the squares of the largest and smallest singular values of $A$, respectively. Although DW-shell separation between interconnected transfer matrices certifies feedback stability~\cite{huang2025geometric}, this full 3D separation is not directly amenable to decentralized verification, motivating lower-dimensional projection-based sufficient tests. In~\cite{huang2025geometric}, the projection of the DW shell onto the
$x$-$z$ plane is defined as the \textbf{$\bm x$-$\bm z$ graph}, which is
\begin{equation}\label{eq:xz_graph}
    P_{xz}(A)= \{(\Re(v^*Av),\|Av\|^2) : v \in \mathbb{C}^n, \|v\|=1\} \,,
\end{equation}
where we use $\Re(\cdot)$ to denote the real part of a complex number. We refer to~\cite{zhang2025phantom} and~\cite{huang2025geometric} for the properties of DW shells.

\begin{figure}[!t]
	\centering
	\includegraphics[width=3.0in]{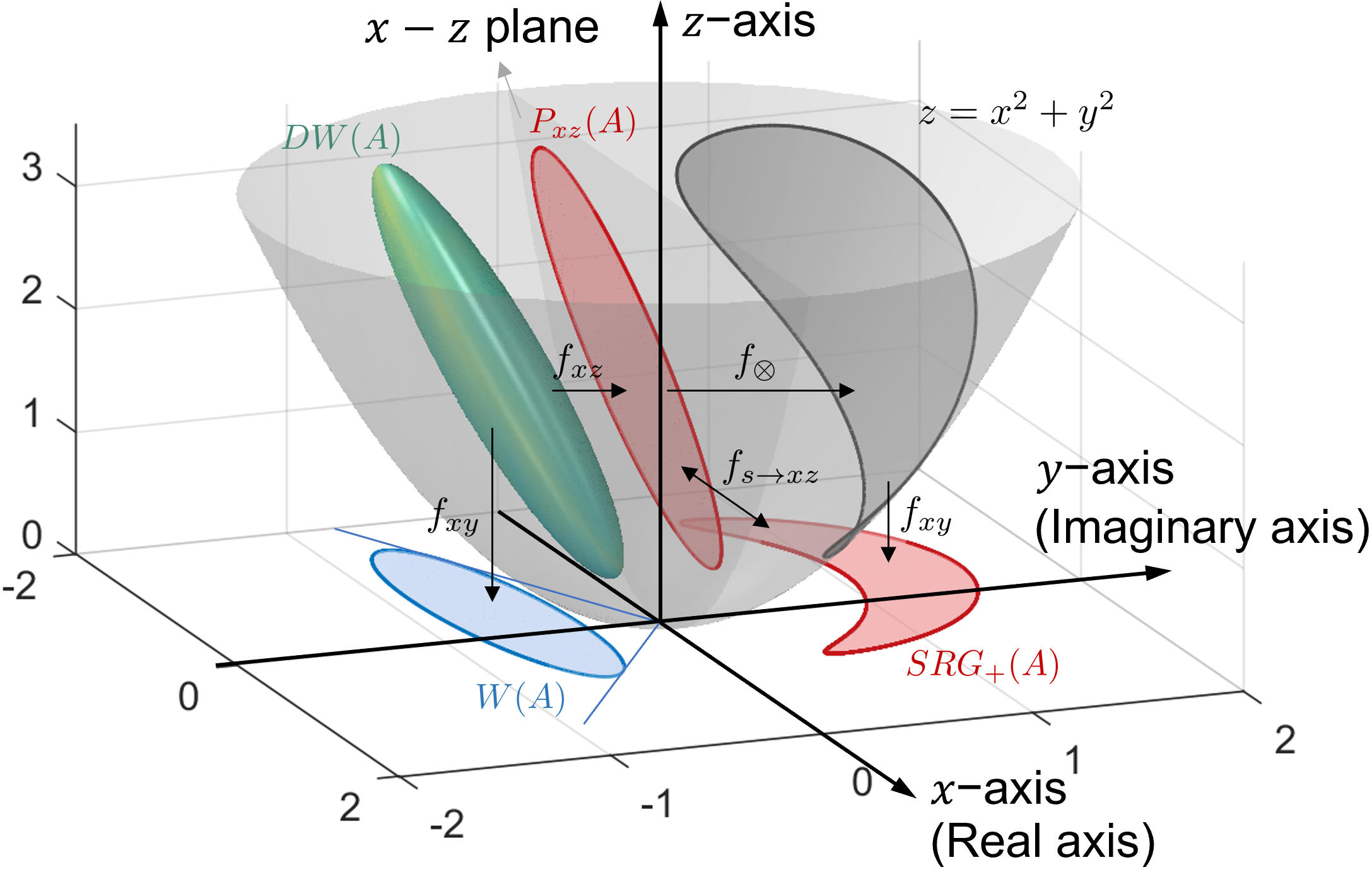}
	\vspace{-3mm}
	%\DeclareGraphicsExtensions.
	\caption{Illustration of the DW shell $DW(A)$, numerical range $W(A)$, $x$-$z$ graph $P_{xz}(A)$, and the positive part of the scaled relative graph $SRG_+(A)$. The $W(A)$ and $P_{xz}(A)$ are projections of $DW(A)$ onto the $x$-$y$ and $x$-$z$ planes via $f_{xy}$ and $f_{xz}$, respectively. The $SRG_+(A)$ and $P_{xz}(A)$ are in bijection via $f_{s\to xz}$, visualized by mapping $P_{xz}(A)$ to the shaded gray patch on the paraboloid $z=x^2+y^2$ via $f_{\otimes}$ and then to the $x$-$y$ plane via $f_{xy}$.}
	\vspace{-4mm}
	\label{Fig_DW_shell_SRG}
    \end{figure}

The concept of scaled relative graphs (SRGs) has also recently gained attention in analyzing dynamical systems~\cite{chaffey2023graphical, baron2025decentralized}, as it simultaneously combines the gain and phase information of an operator. For a complex matrix $A \in \mathbb{C}^{n \times n}$, the \textbf{SRG} can be written in the normalized form as
\begin{equation}\label{eq:SRG2}
\begin{split}
 \!\!\!\! SRG(A) =& \left\{ \|Av\|e^{\pm j{\rm arccos}\left( \frac{\Re(v^*Av)}{\|Av\|}\right)}\!: v\!\in\! \mathbb{C}^n, \|v\|\!=\!1 \right\}\\ 
     =& \left\{  
    \begin{array}{ll}
       c \in \mathbb{C}:\ & \Re(c) = \Re(v^*Av), |c| = \|Av\|,  \\
       &v\in \mathbb{C}^n, \|v\|=1
    \end{array}
    \right\} \,.
\end{split}
\end{equation}

The SRG is defined on the complex plane and is symmetric about the real axis. Moreover, by combining~\eqref{eq:xz_graph} and~\eqref{eq:SRG2} we find that the positive part (above the real axis) of the SRG of a complex matrix $A$, denoted by $SRG_+(A)$, can be uniquely mapped to the $x$-$z$ graph of $A$ through a bijective function $f_{s \rightarrow xz}:\mathbb{C} \rightarrow \mathbb{R}\times \mathbb{R}$ as
\begin{equation}\label{eq:bijective}
    f_{s \rightarrow xz}(c) = (x,z) = (\Re(c),|c|^2) \,.
\end{equation}
Hence, the SRG and $x$-$z$ graph have a one-to-one relationship, and both are related to the DW shell, as also investigated in~\cite{zhang2025phantom}. It can also be seen from Fig.~\ref{Fig_DW_shell_SRG} that $SRG_+(A)$ can be obtained by projecting $P_{xz}(A)$ onto the paraboloid $z = x^2+y^2$ and then onto the $x$-$y$ plane.
However, the SRG and $x$-$z$ graph have different geometric shapes and properties, e.g., the $x$-$z$ graph is always convex~\cite{huang2025geometric} while the SRG is not. This connection between SRG and $x$-$z$ graph will be used next to formulate a more general decentralized stability certificate.

\subsection{Geometric Decentralized Stability Certificate}

As discussed in Section~\ref{2}, the closed-loop dynamics of the PE-dominated power system shown in Fig.~\ref{Fig_closed_loopYY} exhibit two structural properties: $\widetilde{\mathbf Y}_{\mathrm C}^{N}(s,\mathcal O)$ is block-diagonal, and $\widetilde{\mathbf Y}_{\rm grid}$ is constant and positive definite. For brevity of notation, let $H= \widetilde{\mathbf Y}_{\rm grid}^{-1} \succ 0$ and $G(s,\mathcal O)=e^{J{\boldsymbol\theta}}\,\widetilde{\mathbf Y}_{\mathrm C}^{N}(s,\mathcal{O})\, e^{-J{\boldsymbol\theta}}$, which admits a block-diagonal structure as
\begin{equation}
\label{diag_structure}
G(s,\mathcal O)=\operatorname{diag}\{G_1(s,\mathcal O_1),\;\ldots,\; G_N(s,\mathcal O_N)\},
\end{equation}
where $G_i(s,\mathcal O_i)$ captures the dynamics of the $i$-th converter at its operating point $\mathcal O_i$. We further notice that $e^{J\theta_i}$ is a unitary matrix and will not affect the shape of DW shells or $x$-$z$ graphs. Hence, we drop $e^{J{\boldsymbol\theta}}$ and let $G(s,\mathcal O)=\widetilde{\mathbf Y}_{\mathrm C}^{N}(s,\mathcal{O})$, so $G_i(s,\mathcal O_i)=\widetilde{\mathbf Y}_{{\mathrm C},i}(s,\mathcal{O}_i)$.
These structural properties, illustrated in Fig.~\ref{Fig_closed_loop}, enable the following geometric decentralized stability conditions, which extend the framework in~\cite{huang2025geometric} by further incorporating the SRG separation condition and accounting for the impact of variable operating points.

\begin{theorem}[Decentralized stability certificate under variable operating points]
\label{thm:Decentralized}
Let \(G,H\in\mathcal{RH}_{\infty}^{m\times m}\), where
\(G(s,\mathcal O)\) is block-diagonal as in~\eqref{diag_structure} and \(H\succ0\) is a constant positive-definite matrix. Let \(\mathcal S_i\subseteq\bar{\mathcal S}_i\) be a nonempty subset of the \(i\)-th admissible operating-point set. The closed-loop system \(G(s,\mathcal O)\#H\) is stable for all \(\mathcal O\in\mathcal S_1\times\mathcal S_2\times\cdots\times\mathcal S_N\) if, for each \(\mathcal O_i\in\mathcal S_i\) and for each \(\omega\in[0,\infty)\), \textbf{either}
\begin{enumerate}[1)]
    \item \label{theGain}
    the decentralized \textbf{gain} condition, i.e.,
    \begin{equation}\label{eq:DecGain}
    \max_i \, \overline{\sigma}(G_i(j\omega,\mathcal O_i)) 
    < \underline{\sigma}(H^{-1}), \quad holds, \textbf{or}
    \end{equation}
    
    \item \label{thePhase}
    the decentralized \textbf{phase} condition, i.e.,
    \begin{equation}\label{eq:DecPhase}
    \begin{cases}
    \text{a) } 
    \max\limits_i \, \overline{\phi}(G_i(j\omega,\mathcal O_i)) < \pi, \\[2pt]
    \text{b) } 
    \min\limits_i \, \underline{\phi}(G_i(j\omega,\mathcal O_i)) > -\pi, \\[2pt]
    \text{and c) } 
    \max\limits_i \, \overline{\phi}(G_i(j\omega,\mathcal O_i))
    -
    \min\limits_i \, \underline{\phi}(G_i(j\omega,\mathcal O_i))
    < \pi,
    \end{cases}
    \end{equation}
    holds, \textbf{or}
    
    \item \label{theXZ}
    for each \(i\), the decentralized \textbf{\(\bm{x}\)-\(\bm{z}\) graph separation}, i.e.,
    \begin{equation}\label{eq:DecXZ}
    P_{xz}(G_i(j\omega,\mathcal O_i))
    \cap
    P_{xz}\!\left(-\tfrac{1}{\tau}H^{-1}\right)
    =
    \varnothing,
    \end{equation}
    holds \(\forall \tau \in (0,1]\), \textbf{or equivalent to 3)},
    
    \item \label{theSRG}
    for each \(i\), the decentralized \textbf{SRG separation}, i.e.,
    \begin{equation}\label{eq:DecSRG}
    SRG(G_i(j\omega,\mathcal O_i))
    \cap
    SRG\!\left(-\tfrac{1}{\tau}H^{-1}\right)
    =
    \varnothing,
    \end{equation}
    holds \(\forall \tau \in (0,1]\).
\end{enumerate}
\end{theorem}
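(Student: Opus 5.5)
The plan is to reduce the statement to a frequency-wise nonsingularity argument at a \emph{fixed} system operating point, and then handle each of the four alternatives separately. Fix any $\mathcal O\in\mathcal S_1\times\cdots\times\mathcal S_N$. Since each hypothesis is imposed for every $\mathcal O_i\in\mathcal S_i$ and every $\omega$, all four conditions are available at this $\mathcal O$. It therefore suffices to show that, for a fixed $\mathcal O$, $G(s,\mathcal O)\#H$ is stable.

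Both $G$ and $H$ lie in $\mathcal{RH}_\infty$, so I will use the standard homotopy (generalized Nyquist) argument on the family $G\#\tau H$, $\tau\in[0,1]$. At $\tau=0$ the loop is trivially stable. The number of encirclements of $\det(I+\tau G(j\omega)H)$ can only change if $\det(I+\tau G(j\omega)H)=0$ for some $\tau\in(0,1]$ and some $\omega\in[0,\infty]$; negative frequencies follow by conjugate symmetry, and $\omega=\infty$ by properness, possibly via a limiting argument. So the core claim is that $I+\tau G(j\omega)H$ is nonsingular for all such $\tau$ and $\omega$.

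Suppose instead that $I+\tau GH$ is singular. Then there is $u\neq 0$ with $\tau GHu=-u$. Setting $v=Hu$ and normalizing gives
\[
Gv=-\tfrac1\tau H^{-1}v,\qquad \|v\|=1 .
\]
Hence $(v^*Gv,\|Gv\|^2)=(v^*(-\tfrac1\tau H^{-1})v,\|\tfrac1\tau H^{-1}v\|^2)$, so $DW(G)$ and $DW(-\tfrac1\tau H^{-1})$ share a point. In particular, the $x$-$z$ graphs $P_{xz}(G)$ and $P_{xz}(-\tfrac1\tau H^{-1})$ intersect, and so do the SRGs. Each alternative then has to rule out this common point.

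Alternatives 1) and 2) reduce to classical results applied at each frequency.
\begin{itemize}
\item For the gain condition: since $G$ is block-diagonal, $\overline\sigma(G)=\max_i\overline\sigma(G_i)$. The common point forces $\|Gv\|=\tfrac1\tau\|H^{-1}v\|\ge\underline\sigma(H^{-1})$, while \eqref{eq:DecGain} gives $\|Gv\|\le\overline\sigma(G)<\underline\sigma(H^{-1})$, a contradiction. This is the small-gain theorem.
\item For the phase condition: block-diagonality gives $W(G)=\operatorname{conv}\bigcup_i W(G_i)$. Under a)--c), $G$ is therefore sectorial (or semi-sectorial) with $\overline\phi(G)=\max_i\overline\phi(G_i)$ and $\underline\phi(G)=\min_i\underline\phi(G_i)$. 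The matrix $\tau H$ is positive definite, so all its phases are $0$. The small-phase theorem of~\cite{chen2024phase} then yields nonsingularity, because a), b), c) are exactly its requirements $\overline\phi(G)+\overline\phi(\tau H)<\pi$, $\underline\phi(G)+\underline\phi(\tau H)>-\pi$, together with the sectorial span condition.
\end{itemize}

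Alternative 3) is where I expect the main obstacle. The hypothesis separates each \emph{block} $P_{xz}(G_i)$ from $P_{xz}(-\tfrac1\tau H^{-1})$, whereas the contradiction above produces a common point with the \emph{full} $P_{xz}(G)$. The steps are:
\begin{enumerate}
\item Use the property from~\cite{huang2025geometric} that the $x$-$z$ graph of a block-diagonal matrix is the convex hull of the blocks' $x$-$z$ graphs. (Any unit $v=(v_1,\dots,v_N)$ gives a convex combination with weights $\|v_i\|^2$.)
\item Disjointness of each block from a convex set does not by itself pass to the convex hull of their union. So I will exploit the specific shape of $\mathcal R:=\bigcup_{\tau\in(0,1]}P_{xz}(-\tfrac1\tau H^{-1})$ for Hermitian $H^{-1}\succ0$. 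Its points have $x=-a/\tau<0$ and $z\ge x^2$, with the extremes determined by $\lambda_{\min},\lambda_{\max}$.
\item Show that $\mathcal R$ is convex and closed under the scaling generated by $\tau$, and derive a separating half-plane (or cone) argument showing that every $P_{xz}(G_i)$ lies in a common convex set disjoint from $\mathcal R$.
\end{enumerate}
If the direct convexity argument fails, I would fall back on citing the corresponding decentralized $x$-$z$ theorem of~\cite{huang2025geometric}, which covers exactly this interconnection structure, and then only verify that the frequency-wise and operating-point-wise quantifiers match.

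Finally, 4) is equivalent to 3) via the bijection $f_{s\to xz}$ in \eqref{eq:bijective}. Both SRGs are symmetric about the real axis, so two SRGs intersect if and only if their positive parts intersect. By the bijection, this happens if and only if the corresponding $x$-$z$ graphs intersect. Hence \eqref{eq:DecSRG} and \eqref{eq:DecXZ} are equivalent for every $\tau$ and $\omega$, and 4) inherits the conclusion from 3).
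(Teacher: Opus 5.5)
Your skeleton matches the paper's. You fix $\mathcal O$, show that whichever alternative holds at $\omega$ excludes a common point of $DW(G)$ and $DW(-\tfrac1\tau H^{-1})$ (equivalently, singularity of $I+\tau G(j\omega)H$), and obtain 4) from 3) through $f_{s\to xz}$ plus conjugate symmetry. The paper simply cites \cite{huang2024gain} for 1)--2) and \cite{huang2025geometric} for 3) and for the DW-shell-to-stability step. Your direct derivations of 1), 2) and 4) are correct. Because your fallback for 3) is exactly the paper's citation, the proposal is complete at the paper's level of rigor.

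The self-contained route you sketch for 3), however, fails as stated, because $\mathcal R$ is not convex and no half-plane separation exists. For $H=I$ one has $\mathcal R=\{(x,x^2):x\le-1\}$. The block $G_1=-2+\mathrm j$ satisfies \eqref{eq:DecXZ} for every $\tau\in(0,1]$, yet $P_{xz}(G_1)=\{(-2,5)\}$ is the midpoint of $(-1,1)$ and $(-3,9)$, both in $\mathcal R$. So $\mathcal R$ is nonconvex, and no half-plane separates this block from $\mathcal R$.

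What is convex is the \emph{complement} $\mathcal K:=\{(x,z):z\ge x^2\}\setminus\mathcal R$. Let $\lambda_1\le\lambda_n$ be the extreme eigenvalues of $H^{-1}$. Then $\mathcal R=\{(x,z):x\le-\lambda_1,\ x^2\le z\le\Gamma_{\rm grid}(x)\}$, where $\Gamma_{\rm grid}$ is convex. It consists of the $\tau=1$ chord $z=-(\lambda_1+\lambda_n)x-\lambda_1\lambda_n$ for $-x\in[\lambda_1,\tfrac{2\lambda_1\lambda_n}{\lambda_1+\lambda_n}]$, followed by $z=\tfrac{(\lambda_1+\lambda_n)^2}{4\lambda_1\lambda_n}x^2$. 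It meets the parabola at $x=-\lambda_1$ with slope $-(\lambda_1+\lambda_n)\le-2\lambda_1$. Gluing $\Gamma_{\rm grid}$ on $x\le-\lambda_1$ to $x^2$ on $x\ge-\lambda_1$ therefore gives a convex function $\phi$, and $\mathcal K=\{z>\phi(x)\}\cup\{(x,x^2):x>-\lambda_1\}$ is convex.

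Since $\|G_iv\|^2\ge|v^*G_iv|^2\ge(\Re v^*G_iv)^2$, each $P_{xz}(G_i)$ lies in $\{z\ge x^2\}$, so 3) places it in $\mathcal K$. Hence $P_{xz}(G)=\operatorname{conv}\bigcup_iP_{xz}(G_i)\subseteq\mathcal K$ misses every $P_{xz}(-\tfrac1\tau H^{-1})$, and your contradiction argument goes through.
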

\noindent The parameter \(\tau\in(0,1]\) is the scaling parameter used in the sufficient generalized Nyquist criterion adopted from~\cite{huang2025geometric}.

\begin{proof}

    Consider any operating point \(\mathcal O = [\mathcal O_1^\top, \dots, \mathcal O_N^\top]^\top\) where \(\mathcal O_i\in\mathcal S_i\) for all \(i\). For this fixed \(\mathcal O\), the theorem requires that for each \(\omega\in[0,\infty)\), one of the four decentralized conditions (gain, phase, \(x\)-\(z\) graph, or SRG) holds. The gain and phase conditions in \eqref{eq:DecGain} and \eqref{eq:DecPhase} have been proposed and verified in \cite{huang2024gain}. The \(x\)-\(z\) graph separation condition in \eqref{eq:DecXZ} has been further proved in \cite{huang2025geometric}, which is complementary to the gain and phase conditions since they all imply the separation of DW shells. Moreover, under the bijective function in \eqref{eq:bijective}, \(P_{xz}(G_i(j\omega,\mathcal O_i))\) (or \(P_{xz}(-\frac{1}{\tau}H^{-1})\)) can be uniquely mapped to the positive part of \(SRG(G_i(j\omega,\mathcal O_i))\) (or \(SRG(-\frac{1}{\tau}H^{-1})\)), and vice versa, so Condition~4 holds if and only if Condition~3 holds. In short, for any \(\mathcal O\in\mathcal S_1\times\mathcal S_2\times\cdots\times\mathcal S_N\), the condition of DW shell separation holds if any of the four conditions is satisfied, and the closed-loop system \(G(s,\mathcal O)\#H\) is stable~\cite{huang2025geometric}. This completes the proof.
\end{proof}

The conditions in Theorem~\ref{thm:Decentralized} differ in conservativeness and computational burden, making them complementary for certifying subsets \(\mathcal S_i\) of admissible operating points. The gain/phase conditions are simpler to check, while the \(x\)-\(z\) graph and SRG separations retain more geometric information and can reduce conservativeness. Following the idea of mixed geometric stability conditions, the subsets \(\mathcal S_i\) are certified as stability regions when, for each \(\mathcal O_i\in\mathcal S_i\) and each frequency, at least one of these decentralized conditions is satisfied.

It can be seen from conditions~\eqref{eq:DecGain}-\eqref{eq:DecSRG} that the overall system stability can be analyzed in a modular and decentralized way. Specifically, for each converter \(i\), these conditions are checked for all \(\mathcal O_i\in\mathcal S_i\), rather than by enumerating all system operating-point combinations. Once the subsets \(\mathcal S_i\) are certified as stability regions, Theorem~\ref{thm:Decentralized} guarantees the stability of the multi-converter system for any \(\mathcal O\in\mathcal S_1\times\mathcal S_2\times\cdots\times\mathcal S_N\). The reason why we can split \(\mathcal O\) into the individual \(\mathcal O_i\) is that the system has a special structure as in Fig.~\ref{Fig_closed_loop}, where \(\mathcal O_i\) enters only into the block \(G_i(s,\mathcal O_i)\) and \(\mathcal O\) does not affect \(H\) at all.

\begin{figure}[!t]
    \vspace{0mm}
	\centering
	\includegraphics[width=2.5in]{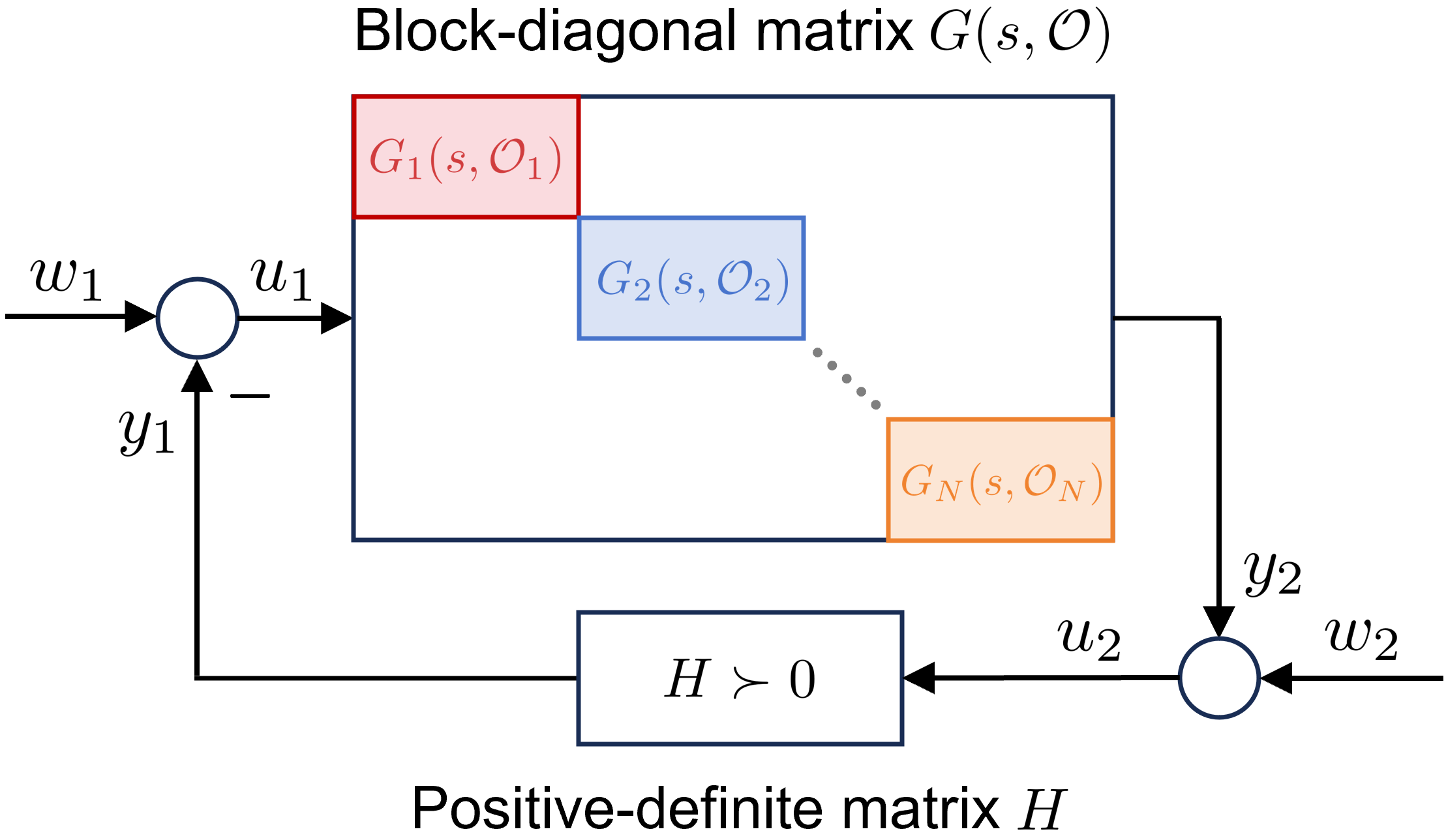}
	\vspace{-2mm}
	%\DeclareGraphicsExtensions.
	\caption{A closed-loop system \(G(s,\mathcal O)\#H\), where \( G(s,\mathcal O) \) is block-diagonal and \( H \) is a constant, positive-definite matrix.}
	\vspace{-4mm}
	\label{Fig_closed_loop}
    \end{figure}

%\subsection{Decentralized Stability Operating Region (DSOR)}

\section{Decentralized Stability Operating Region \\ and Its Efficient Construction} \label{4}
In this section, we define a geometric stability margin and the corresponding decentralized stability operating region using \(x\)-\(z\) graphs. 
We then develop a fast algorithm to construct this region, guided by worst-case power-injection analysis.

\subsection{Decentralized Stability Operating Region}
We are now interested in finding the largest subsets \(\mathcal S_i\) certifiable by the conditions in Theorem~\ref{thm:Decentralized}. 
To this end, we focus on the \(x\)-\(z\) graph condition, which is generally the least conservative among them, and the resulting subsets thus yield the largest certified stability region for \(\mathcal O\). As a first step, we define the decentralized stability margin for each converter.

\begin{definition}[Decentralized stability margin based on $x$-$z$ graphs]
\label{def:margin}
For the $i$-th converter with operating point $\mathcal O_i$, the decentralized stability margin is defined as
\begin{equation}
\label{eq:MCi_def}
\!\!M_{i}(\mathcal O_i)\! := 
    \operatorname{dist}\!\left\{
    \bigcup_{\scriptscriptstyle\omega\in[0,\infty)} \hspace{-2mm} P_{xz}\bigl(G_i(j\omega,\mathcal O_i)\bigr),
      \bigcup_{\scriptscriptstyle\tau\in(0,1]} \hspace{-1mm} P_{xz}\bigl(-\tfrac{1}{\tau}H^{-1}\bigr)\!
    \right\},
\end{equation}
where $\operatorname{dist}\{A,B\}:=\min_{x\in A,\,y\in B}\|x-y\|$ is the Euclidean distance between two sets.
\end{definition}

We note that the \(x\)-\(z\) graph and SRG separation conditions are equivalent due to the bijection in~\eqref{eq:bijective}. However, unlike SRGs, which are usually nonconvex, \(x\)-\(z\) graphs are always convex. For the considered \(G_i(j\omega,\mathcal O_i)\), which is a \(2\times2\) complex matrix at any \(\omega\), its \(x\)-\(z\) graph is a solid ellipse~\cite{huang2025geometric}. This convexity makes the distance computation more convenient, and thus we define the stability margin based on \(x\)-\(z\) graphs.

    \begin{figure}[!t]
    \vspace{-1.5mm}
	\centering
	\includegraphics[width=2.7in]{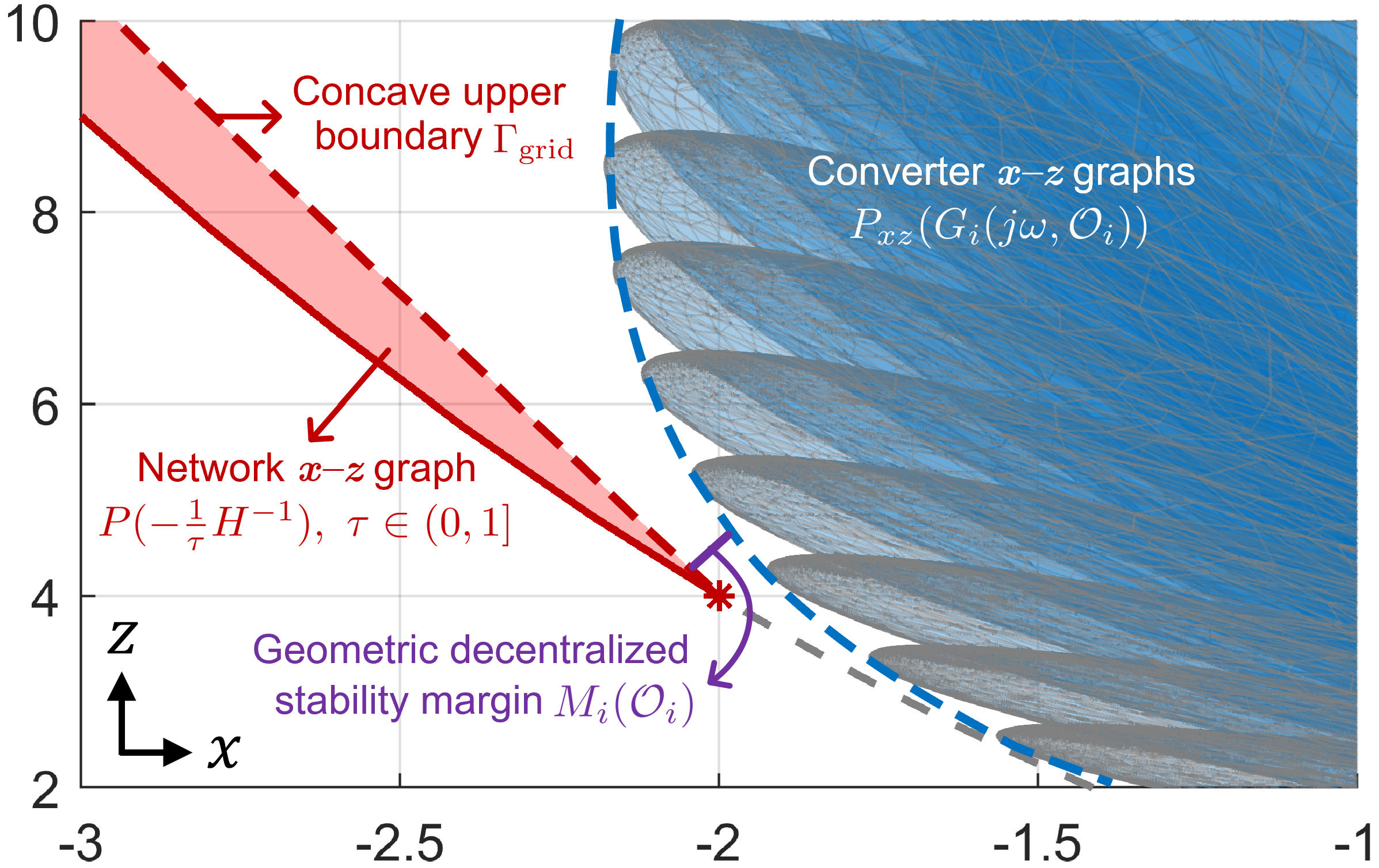}
	\vspace{-2mm}
	%\DeclareGraphicsExtensions.
	\caption{Illustration of the stability margin \(M_i(\mathcal O_i)\), i.e., the distance between the converter's \(x\)-\(z\) graphs \(P_{xz}(G_i(j\omega,\mathcal O_i))\) and the network's \(x\)-\(z\) graph trajectory \(P_{xz}(-\frac{1}{\tau}H^{-1})\) over \(\tau\in(0,1]\), with upper boundary \(\Gamma_{\rm grid}\).}
	\vspace{-5mm}
	\label{Fig_geometric_margin}
    \end{figure}

\begin{remark}
Since $\widetilde{\mathbf Y}_{\rm grid}\succ0$, the $x$-$z$ graph $P_{xz}(-\tfrac{1}{\tau}\widetilde{\mathbf Y}_{\rm grid})$ over $\tau\in(0,1]$ has an upper boundary $\Gamma_{\rm grid}$ that is concave upward~\cite{huang2025geometric}, as shown in Fig.~\ref{Fig_geometric_margin}. We can use this upper boundary to simplify the computation of stability margin as
\begin{equation}
\label{eq:MCi_equiv}
M_{i}(\mathcal O_i) = 
    \operatorname{dist}\left\{
        \bigcup_{\scriptscriptstyle\omega\in[0,\infty)} \hspace{-2mm} P_{xz}\bigl(G_i(j\omega,\mathcal O_i)\bigr),\;
        \Gamma_{\rm grid}
    \right\}.
\end{equation}
\end{remark}

It can be seen from Theorem~\ref{thm:Decentralized} and the definition of \(M_i(\mathcal O_i)\) that a positive margin, i.e., \(M_i(\mathcal O_i) > 0\), implies \(x\)-\(z\) graph separation for the $i$-th converter. Hence, if \(M_i(\mathcal O_i) > 0\) holds for all the converters, then the closed-loop system is stable at \(\mathcal O = [\mathcal{O}_1^\top,\ldots,\mathcal{O}_N^\top]^\top\). Accordingly, the largest $\mathcal S_i$ that satisfies the condition in~\eqref{eq:DecXZ}, referred to as the decentralized stability operating region (DSOR), can be formulated as
\begin{equation}
\label{eq:DSOR_def}
\mathcal D_i := \bigl\{ \mathcal O_i \in \bar{\mathcal S}_i \;\big|\; M_i(\mathcal O_i) > 0 \bigr\}.
\end{equation}
Further, the complementary region \(\bar{\mathcal S}_i \setminus \mathcal D_i\) is referred to as the risky operating region, where the decentralized stability condition is violated and instability may arise.
% Since the \(x\)-\(z\) graph condition is less conservative than the gain and phase conditions, \(\mathcal D_i\) defined via \(M_i(\mathcal O_i) > 0\) yields the largest certified stability subset of \(\bar{\mathcal S}_i\) among those obtainable from the conditions in Theorem~\ref{thm:Decentralized}.
Based on~\eqref{eq:DSOR_def}, we can obtain the following stability condition, which evaluates the system stability by examining the operating points.
\begin{corollary}[Decentralized stability certificate based on DSORs]
\label{cor:DSOR}
Consider  \(G,H\in\mathcal{RH}_{\infty}^{m\times m}\), where
\(G(s,\mathcal O)\) is block-diagonal as in~\eqref{diag_structure}
and \(H\succ0\) is a constant positive-definite matrix. The closed-loop system \(G(s,\mathcal O)\#H\), which represents the multi-converter system in Fig.~\ref{common_topology}, is stable if \(\mathcal O_i \in \mathcal D_i\) for all $i \in \{1,...,N\}$, or equivalently, \(\mathcal O\in \mathcal D_1 \times \cdots \times \mathcal D_N\).
\end{corollary}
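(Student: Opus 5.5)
The plan is to reduce Corollary~\ref{cor:DSOR} to condition~3) of Theorem~\ref{thm:Decentralized}, applied with the choice \(\mathcal S_i=\mathcal D_i\). First I check the standing hypotheses. Theorem~\ref{thm:Decentralized} asks that each \(\mathcal S_i\) be a nonempty subset of \(\bar{\mathcal S}_i\). By~\eqref{eq:DSOR_def}, \(\mathcal D_i\subseteq\bar{\mathcal S}_i\). Nonemptiness is automatic: the corollary assumes some \(\mathcal O_i\in\mathcal D_i\). If one prefers to avoid this point, one can even take the singleton \(\mathcal S_i=\{\mathcal O_i\}\) for the given operating point.

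Second, I show that \(M_i(\mathcal O_i)>0\) implies~\eqref{eq:DecXZ} at every \(\omega\in[0,\infty)\) and every \(\tau\in(0,1]\). Suppose instead that some \(\omega_0\) and \(\tau_0\) give a common point \(p\in P_{xz}(G_i(j\omega_0,\mathcal O_i))\cap P_{xz}(-\tfrac{1}{\tau_0}H^{-1})\). Then \(p\) lies in both unions appearing in~\eqref{eq:MCi_def}. Their distance is therefore \(\|p-p\|=0\), which contradicts \(M_i(\mathcal O_i)>0\).

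One subtlety is that \(\operatorname{dist}\) is written as a \(\min\) over possibly noncompact unions, since \(\omega\) and \(\tau\) range over half-open sets. It is safer to read it as an infimum. The argument above only needs that a common point forces the distance to zero, and that holds for the infimum as well. So the step is valid under either reading.

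Third, condition~3) now holds for every \(\mathcal O_i\in\mathcal D_i\), every \(i\), and every \(\omega\). Theorem~\ref{thm:Decentralized} then yields stability of \(G(s,\mathcal O)\#H\) for all \(\mathcal O\in\mathcal D_1\times\cdots\times\mathcal D_N\). This is exactly the claim, since ``\(\mathcal O_i\in\mathcal D_i\) for all \(i\)'' is by definition of the Cartesian product equivalent to \(\mathcal O\in\mathcal D_1\times\cdots\times\mathcal D_N\), using~\eqref{eq:system_operating_point}. The only place needing care is the second step's handling of \(\operatorname{dist}\) over noncompact unions. The decentralization itself is inherited entirely from the block-diagonal structure already exploited in Theorem~\ref{thm:Decentralized}.
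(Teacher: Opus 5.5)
Your proposal is correct and follows the paper's own proof. You take \(\mathcal O_i\in\mathcal D_i\), use \(M_i(\mathcal O_i)>0\) to get the \(x\)-\(z\) graph separation~\eqref{eq:DecXZ} for every \(i\), \(\omega\), and \(\tau\), and then invoke condition~3) of Theorem~\ref{thm:Decentralized}. Your extra care about nonemptiness of \(\mathcal S_i\), and about a common point forcing zero distance under either the \(\min\) or \(\inf\) reading of \(\operatorname{dist}\), makes explicit what the paper's one-line ``by definition'' step leaves implicit.
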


\begin{proof}
By definition, $\mathcal O_i \in \mathcal D_i$ implies $M_i(\mathcal O_i) > 0$, so the condition of $x$-$z$ graph separation holds for the $i$-th converter. If this holds for all $i \in \{1,...,N\}$, the condition in~\eqref{eq:DecXZ} is satisfied. Hence, the closed-loop system is stable according to Theorem~\ref{thm:Decentralized}. This completes the proof.
\end{proof}

Corollary~\ref{cor:DSOR} enables a direct decentralized verification procedure focusing on the converter's operating points. Specifically, once $\mathcal D_i$ is obtained for each converter, the stability of a system operating point $\mathcal O = [\mathcal O_1^\top,\dots,\mathcal O_N^\top]^\top$ is certified by checking if $\mathcal O_i \in \mathcal D_i$ for all $i$. Hence, stability analysis over $\mathcal O$ in the $3N$-dimensional space $\bar{\mathcal S}$ reduces to $N$ independent tests of $\mathcal O_i$, avoiding the problem of combinatorial explosion.

The key remaining issue is how to efficiently construct the DSOR \(\mathcal D_i\) for each converter. Since \(\mathcal D_i\) is defined based on the stability margin \(M_i(\mathcal O_i)\), its construction requires extensive evaluations of \(M_i(\mathcal O_i)\) over \(\mathcal O_i \in \bar{\mathcal S}_i\). Each evaluation involves repeated distance computations between the upper boundary $\Gamma_{\rm grid}$ and the converter's $x$-$z$ graphs at different frequencies, as shown in~\eqref{eq:MCi_equiv}. To make this procedure tractable, the next subsection develops a continuation-based method to achieve fast stability margin evaluation and DSOR construction.

\begin{algorithm}[t]
\caption{Efficient Evaluation of $M_i(\mathcal O_i)$ in~\eqref{eq:MCi_equiv}}
\label{alg:fast_margin}
\footnotesize

\textbf{Input:} operating point $\mathcal O_i$; grid boundary $g:[x_L,x_R]\rightarrow\Gamma_{\mathrm{grid}}$; frequency range $[f_{\min},f_{\max}]$; step size of frequency scan $\Delta f$; numbers of candidate frequency points $N_l,N_g$; tolerances $\varepsilon,\varepsilon_{\mathrm{op}}$; operating point cache $\mathcal C_{\mathrm{op}}$.

\vspace{2pt}
\hrule
\vspace{2pt}

\noindent\textbf{Subroutine: distance evaluation at a given frequency $d_i(f,\mathcal O_i)$}

\vspace{-2pt}

\begin{enumerate}[(a)]

\item Set $\omega=2\pi f$ for $\widetilde{\mathbf Y}_{{\mathrm C},i}(j\omega,\mathcal O_i)$, and then parameterize the converter's $x$-$z$ graph as $p(\xi)=c+U\,[a\cos\xi,\;b\sin\xi]^\top$, where $\xi\in[0,2\pi)$.

\item Solve $d_i(f,\mathcal O_i)=\min_{\xi\in[0,2\pi),\,x\in[x_L,x_R]}\|p(\xi)-g(x)\|_2$ by damped Newton iterations to obtain $(\xi^\star,x^\star)$. If the frequency cache $\mathcal C_{\mathrm f}\neq\varnothing$, set the initial $(\xi,x)$ to $(\xi_c^\star,x_c^\star)$ from the nearest cached frequency $f_c$; otherwise, initialize $(\xi,x)$ by uniform sampling over $[0,2\pi)\times[x_L,x_R]$.

\item Store $(f,\xi^\star,x^\star)$ in $\mathcal C_{\mathrm f}$ and return $d_i(f,\mathcal O_i)$.

\end{enumerate}

\vspace{2pt}
\hrule
\vspace{2pt}

\noindent\textbf{Main routine: continuation-based margin evaluation}

\vspace{-2pt}

\begin{enumerate}[1)]

\item If there exists a cached pair $(\mathcal O_c,f_c^\star)$ in $\mathcal C_{\mathrm{op}}$ with $\|\mathcal O_i-\mathcal O_c\|<\varepsilon_{\mathrm{op}}$, choose $N_l$ candidate frequency points $\{f_k\}$ uniformly from $[\max(f_{\min},f_c^\star-\Delta f),\;\min(f_{\max},f_c^\star+\Delta f)]$; otherwise choose $N_g$ candidate frequency points $\{f_k\}$ uniformly from $[f_{\min},f_{\max}]$.

\item For each $f_k$, evaluate $d_k = d_i(f_k,\mathcal O_i)$, and let $\hat{k} = \operatorname{arg\,min}_k d_k$. 

\item Form a local search interval $[f_L,f_R]$ using the adjacent coarse points around $f_{\hat{k}}$, and solve $f^\star = \operatorname{arg\,min}_{f\in[f_L,f_R]} d_i(f,\mathcal O_i)$ by a one-dimensional search (e.g., \texttt{fminbnd}). Set $M_i(\mathcal O_i) = d_i(f^\star,\mathcal O_i)$.

\item If \(M_i(\mathcal O_i)>\varepsilon\), store
\((\mathcal O_i,f^\star)\) in \(\mathcal C_{\mathrm{op}}\);
otherwise mark as zero-margin.

\end{enumerate}

\textbf{Output:} Geometric decentralized stability margin $M_{i}(\mathcal O_i)$ and frequency $f^\star$.

\end{algorithm}

\subsection{Efficient Construction of DSORs}
\label{sec:fast_computation}

According to~\eqref{eq:MCi_equiv}, evaluating $M_i(\mathcal O_i)$ requires finding the minimum distance between the converter's $x$-$z$ graphs and the grid boundary $\Gamma_{\rm grid}$ over frequency. In practical computation, the search is performed over a prescribed frequency range $[f_{\min},f_{\max}]$. At a certain frequency $\renewcommand{\arraystretch}{0.7}f={\omega}/{2\pi}$, the boundary of the converter's $x$-$z$ graph is an ellipse and thus can be parameterized by $p(\xi)$ where $\xi\in[0,2\pi)$. Hence, the distance evaluation at this frequency is formulated as the closest-point search between $p(\xi)$ and the grid boundary $\Gamma_{\rm grid}$ parameterized by $g(x)$ with $x\in[x_L,x_R]$, expressed as $d_i(f,\mathcal O_i)=\min_{\xi\in[0,2\pi),\,x\in[x_L,x_R]}\|p(\xi)-g(x)\|_2$. Then, an outer frequency search over $f \in [f_{\min},f_{\max}]$ determines the minimizing frequency $f^\star$ and gives $M_i(\mathcal O_i)=d_i(f^\star,\mathcal O_i)$. Algorithm~\ref{alg:fast_margin} summarizes this evaluation procedure.

Moreover, Algorithm~\ref{alg:fast_margin} accelerates repeated margin evaluations using continuation-based initialization at both the operating-point and frequency-scan levels. At the operating-point level, the cache $\mathcal C_{\rm op}$ stores entries $(\mathcal O_c,f_c^\star)$, where $f_c^\star$ is the minimizing frequency at a previously evaluated operating point $\mathcal O_c$. For a nearby $\mathcal O_i$, the coarse frequency scan is thus restricted to a local window centered at $f_c^\star$ instead of the full frequency range. At the frequency level, the cache $\mathcal C_{\rm f}$ stores entries $(f_c,\xi_c^\star,x_c^\star)$, where $(\xi_c^\star,x_c^\star)$ initializes the damped Newton iterations at the neighboring frequencies. The algorithm then uses a coarse frequency scan to locate a candidate interval and refines $f^\star$ by a one-dimensional search.

Although Algorithm~\ref{alg:fast_margin} evaluates $M_i(\mathcal O_i)$ at a single operating point $\mathcal O_i$, it can be repeatedly called over sampled operating points in $\bar{\mathcal S}_i$ to construct $\mathcal D_i$. In this process, continuation-based initialization reduces the cost of repeated evaluations, and $\mathcal D_i$ is obtained by identifying the operating points satisfying $M_i(\mathcal O_i)>0$. In what follows, a single-converter example is provided to illustrate the DSOR construction and the computational efficiency of the proposed algorithm.

\begin{example}[Single-converter DSOR construction]\label{ex:single_converter}
Consider a single-converter system where a GFL converter operates in either PQ or PV control mode. The control structure is shown in Fig.~\ref{common_topology}, and the system parameters are provided in the Supplementary Material. The admissible operating set is specified by 
\(\kappa=1.2\), \(\Psi=[-\pi,\pi)\), and \(\mathcal V=[0.9,1.1]\,\mathrm{p.u.}\)

We apply Algorithm~\ref{alg:fast_margin} to evaluate \(M_i(\mathcal O_i)\) over the sampled admissible operating set. Fig.~\ref{Fig_DSOR_DROR}~(a) and (c) show the resulting stability margin distributions and the corresponding DSORs under PQ and PV control modes, respectively. In this example, the risky regions mainly occur at high active-power injection with low voltage in PQ mode, and high active-power absorption with high voltage in PV mode. For comparison, a brute-force baseline is implemented by uniformly discretizing \(f\), \(\xi\), and \(x\), without using continuation at either the operating-point or frequency level. On an AMD Ryzen 7 PRO 6850H processor with \(32\)~GB RAM, the brute-force baseline requires \(3.414\,\mathrm{s}\) per operating point on average, whereas Algorithm~\ref{alg:fast_margin} requires only \(0.018\,\mathrm{s}\), corresponding to a \(189.7\times\) speedup.
\end{example}

The example above verifies the effectiveness and computational efficiency of Algorithm~\ref{alg:fast_margin} in constructing DSORs. Building on this procedure, the next step is to further speed up the operating-point scan by identifying where small stability margins are more likely to occur. We observe that a small margin is usually reflected by the left-hand side of the converter's \(x\)-\(z\) graphs approaching the grid boundary $\Gamma_{\rm grid}$. The next subsection further analyzes this phenomenon to provide insight into the efficient search for the DSOR boundary.

     \begin{figure}[!t]
     \vspace{0mm}
    	\centering
    	\includegraphics[width=0.95\linewidth]{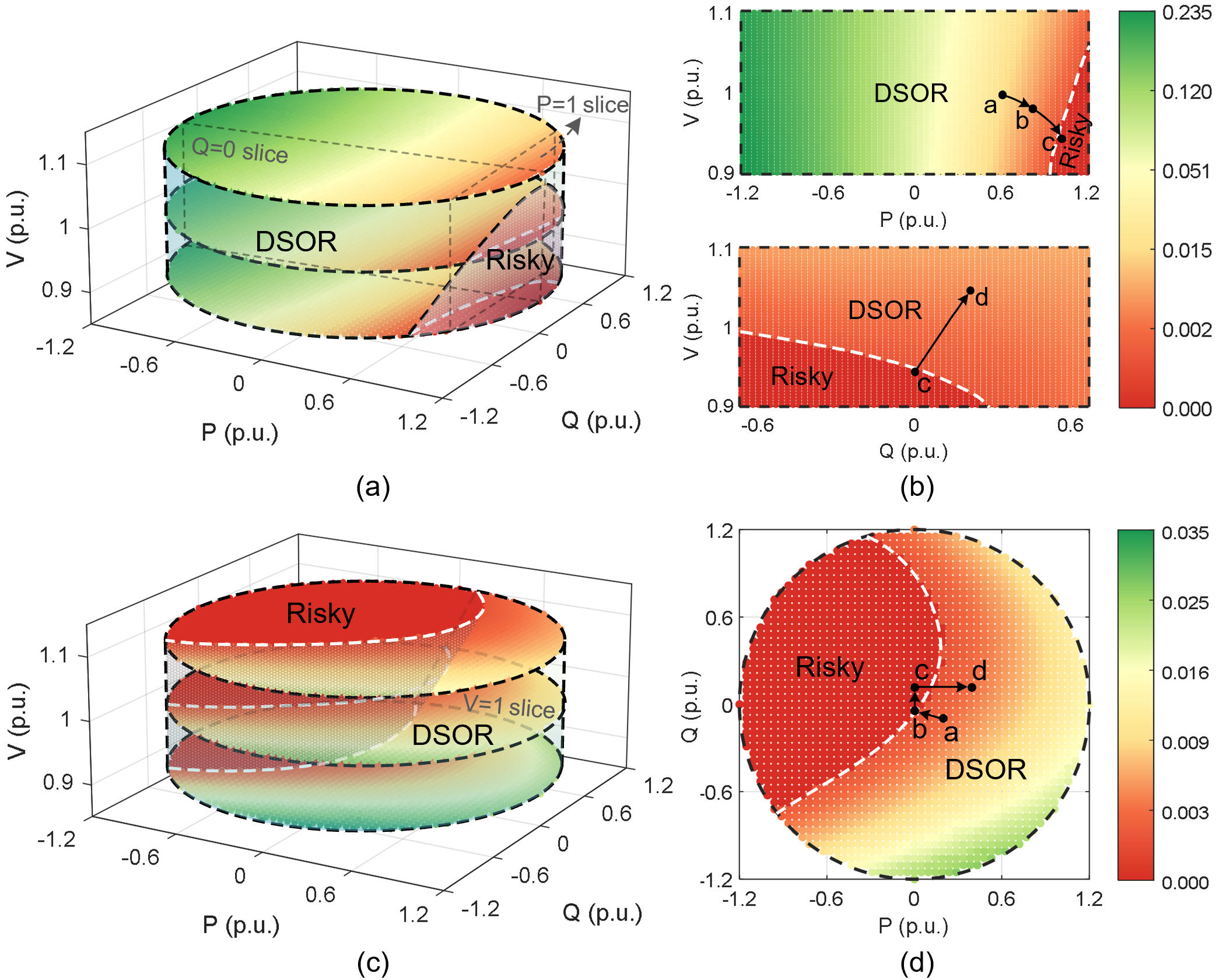}
    	\vspace{-2mm}
    	%\DeclareGraphicsExtensions.
    	\caption{Decentralized operating regions of a single-converter system. (a) PQ mode. (b) Vertical slices of the PQ operating domain at \(Q=0\) and \(P=1\). (c) PV mode. (d) Horizontal slice of the PV operating domain at \(V=1\).}
    	\vspace{-4mm}
    	\label{Fig_DSOR_DROR}
    \end{figure}

\subsection{Worst Case Analysis of Power Injections}\label{worst case}
We next investigate how to find the worst-case operating conditions with respect to \(M_i(\mathcal O_i)\) and further alleviate the computational burden of DSORs. It can be seen from Fig.~\ref{Fig_geometric_margin} that if the converter's $x$-$z$ graphs move leftward, then \(M_i(\mathcal O_i)\) decreases and the system may become unstable. Hence, 
we analyze how the left bounds of the converter's \(x\)-\(z\) graphs change with the power injections \((P_i,Q_i)\). Specifically, for a frequency \(\omega\in[0,\infty)\) and an operating point \(\mathcal O_i\), the leftmost \(x\)-value of the converter's \(x\)-\(z\) graph defined in \eqref{eq:xz_graph} is
\begin{equation}
\begin{split}
\underline{x}_i(\omega,\mathcal O_i)= & 
\min_{(x,z)\in P_{xz}(\widetilde{\mathbf Y}_{{\rm C},i}(j\omega,\mathcal O_i))} x \\
=&
\min_{\|v\|_2=1}
\Re\!\left(
v^*
\mathbf Y_{{\rm C},i}(j\omega,\mathcal O_i)
\mathbf F_\epsilon^{-1}(j\omega)
v
\right) \\
=&
\min_{\|v\|_2=1}
v^*\mathbf H_i(\omega,\mathcal O_i)v ,
\end{split}
\label{eq:leftmost_x_H}
\end{equation}
where the Hermitian matrix \(\mathbf H_i(\omega,\mathcal O_i)\) is defined as
\begin{equation}
\!\! \mathbf H_i(\omega,\mathcal O_i)
:=
\tfrac{
\mathbf Y_{{\rm C},i}(j\omega,\mathcal O_i)\mathbf F_\epsilon^{-1}(j\omega)
+
\left[
\mathbf Y_{{\rm C},i}(j\omega,\mathcal O_i)\mathbf F_\epsilon^{-1}(j\omega)
\right]^*
}{2}.
\label{eq:H_i_def}
\end{equation}

According to~\eqref{eq:Y0_PQ_PV}, for fixed \(V_i\),
\(\mathbf Y_{{\rm C},i}(j\omega,\mathcal O_i)\) is affine in \((P_i,Q_i)\)
under both PQ and PV control modes. Since \(\mathbf F_\epsilon^{-1}(j\omega)\) is independent of the operating point and taking the Hermitian part is a linear operation, we can deduce that \(\mathbf H_i(\omega,\mathcal O_i)\) is also an affine Hermitian matrix in \((P_i,Q_i)\), expressed as
\begin{equation}
\mathbf H_i(\omega,\mathcal O_i)
=
\mathbf H_{i,0}(\omega,V_i)
+
P_i\mathbf H_{i,P}(\omega,V_i)
+
Q_i\mathbf H_{i,Q}(\omega,V_i),
\label{eq:affine_Hermitian}
\end{equation}
where \(\mathbf H_{i,0}(\omega,V_i)\), \(\mathbf H_{i,P}(\omega,V_i)\), and \(\mathbf H_{i,Q}(\omega,V_i)\) are Hermitian matrices independent of \((P_i,Q_i)\). Based on this affine property, we obtain the following result.

\begin{proposition}[Concavity of the function \(\underline{x}_i(\omega,\mathcal O_i)\) and properties of its minimum]
\label{prop:monotonicity}
For any fixed \(V_i\) and
\(\omega\in[0,\infty)\), the function
\(\underline{x}_i(\omega,\mathcal O_i)\) in~\eqref{eq:leftmost_x_H} is concave with respect to \((P_i,Q_i)\). Hence, for any nonempty compact power-injection region $\mathcal U_i$ with $(\mathcal U_i,V_i)\subset  \bar{\mathcal S}_i$, the minimum of \(\underline{x}_i\) over \(\mathcal U_i\) is attained on its boundary, that is,
\begin{equation}
\min_{(P_i,Q_i)\in\mathcal U_i}
\underline{x}_i(\omega,\mathcal O_i)
=
\min_{(P_i,Q_i)\in\partial\mathcal U_i}
\underline{x}_i(\omega,\mathcal O_i),
\label{eq:boundary_general}
\end{equation}
where $\partial\mathcal U_i$ denotes the boundary of $\mathcal U_i$. 
For some nonempty compact regions
\(\{\mathcal U_i(\rho)\}_{\rho\ge0}\) satisfying
\(\mathcal U_i(\rho_1)\subseteq\mathcal U_i(\rho_2)\) for
\(0\le\rho_1<\rho_2\), the boundary minimum is non-increasing, i.e.,
\begin{equation}
\min_{(P_i,Q_i)\in\partial\mathcal U_i(\rho_2)}
\underline{x}_i(\omega,\mathcal O_i)
\le
\min_{(P_i,Q_i)\in\partial\mathcal U_i(\rho_1)}
\underline{x}_i(\omega,\mathcal O_i).
\label{eq:boundary_monotonicity}
\end{equation}
\end{proposition}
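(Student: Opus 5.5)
The plan rests on the affine Hermitian representation \eqref{eq:affine_Hermitian}. Fix \(V_i\) and \(\omega\). For each unit vector \(v\in\mathbb C^2\), the map
\[
(P_i,Q_i)\mapsto v^*\mathbf H_i(\omega,\mathcal O_i)v=v^*\mathbf H_{i,0}v+P_i\,v^*\mathbf H_{i,P}v+Q_i\,v^*\mathbf H_{i,Q}v
\]
is a real affine function of \((P_i,Q_i)\). The coefficients are real because the three matrices are Hermitian. By \eqref{eq:leftmost_x_H}, \(\underline{x}_i(\omega,\mathcal O_i)\) is the pointwise infimum of this family of affine functions over the unit sphere. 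It is therefore concave in \((P_i,Q_i)\). Equivalently, \(\underline{x}_i=\lambda_{\min}(\mathbf H_i)\), and the minimum eigenvalue of an affine Hermitian matrix function is concave. The function is also finite and continuous, since eigenvalues depend continuously on the matrix entries.

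For \eqref{eq:boundary_general}, \(\mathcal U_i\) is compact and \(\underline{x}_i\) is continuous, so a minimizer \((P^\star,Q^\star)\in\mathcal U_i\) exists. If it already lies on \(\partial\mathcal U_i\), there is nothing to prove. Otherwise take any line through \((P^\star,Q^\star)\). Its intersection with \(\mathcal U_i\) is bounded, so starting from the interior point and moving in each of the two directions, we reach points \(p_1,p_2\in\partial\mathcal U_i\) with \((P^\star,Q^\star)\) on the segment between them. The points \(p_1,p_2\) are the first exit points from \(\mathcal U_i\), or equivalently the boundary points reached first along the ray. One caveat: if \(\mathcal U_i\) is not convex, the whole segment \([p_1,p_2]\) need not lie in \(\mathcal U_i\). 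This does not matter, because concavity holds on all of \(\mathbb R^2\), and this is the point I would stress in the write-up. Concavity gives \(\underline{x}_i(P^\star,Q^\star)\ge\min\{\underline{x}_i(p_1),\underline{x}_i(p_2)\}\). Since \(p_1,p_2\in\mathcal U_i\) (compact sets are closed), both values are also \(\ge\underline{x}_i(P^\star,Q^\star)\). So the minimum over \(\mathcal U_i\) is attained on \(\partial\mathcal U_i\). The reverse inequality is trivial because \(\partial\mathcal U_i\subseteq\mathcal U_i\), which gives equality. The only care needed is the degenerate case where \(\mathcal U_i\) has empty interior. There \(\partial\mathcal U_i=\mathcal U_i\) in \(\mathbb R^2\), and the claim is immediate.

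For \eqref{eq:boundary_monotonicity}, apply \eqref{eq:boundary_general} to both sets and use the nesting:
\[
\min_{\partial\mathcal U_i(\rho_2)}\underline{x}_i=\min_{\mathcal U_i(\rho_2)}\underline{x}_i\le\min_{\mathcal U_i(\rho_1)}\underline{x}_i=\min_{\partial\mathcal U_i(\rho_1)}\underline{x}_i .
\]
The inequality holds because a minimum over a larger set cannot exceed the minimum over a subset.

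I expect the main obstacle to be the boundary-attainment step for possibly nonconvex \(\mathcal U_i\), which the segment/ray argument above is designed to handle. The concavity itself is routine once \eqref{eq:affine_Hermitian} is invoked. If the intended meaning of "boundary" were something else, for instance the relative boundary within an affine hull, I would adjust the degenerate-case remark accordingly.
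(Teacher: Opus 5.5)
Your proposal is correct and follows the paper's proof essentially step by step. Concavity comes from the affine Hermitian structure, since $\underline{x}_i$ is a minimum of affine functions. Boundary attainment uses a line through an interior minimizer, which is bracketed by two boundary points, and monotonicity follows from nesting combined with the boundary result. The only cosmetic difference is that the paper takes the endpoints of the connected component of the line's intersection with $\mathcal U_i$ containing the minimizer, so the segment stays inside $\mathcal U_i$ and your nonconvexity caveat (appealing to concavity on all of $\mathbb R^2$) is harmless but not needed.
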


\begin{proof}
The proof is given in the Supplementary Material.
\end{proof}

     \begin{figure}[!t]
     \vspace{-3mm}
    	\centering
    	\includegraphics[width=0.7\linewidth]{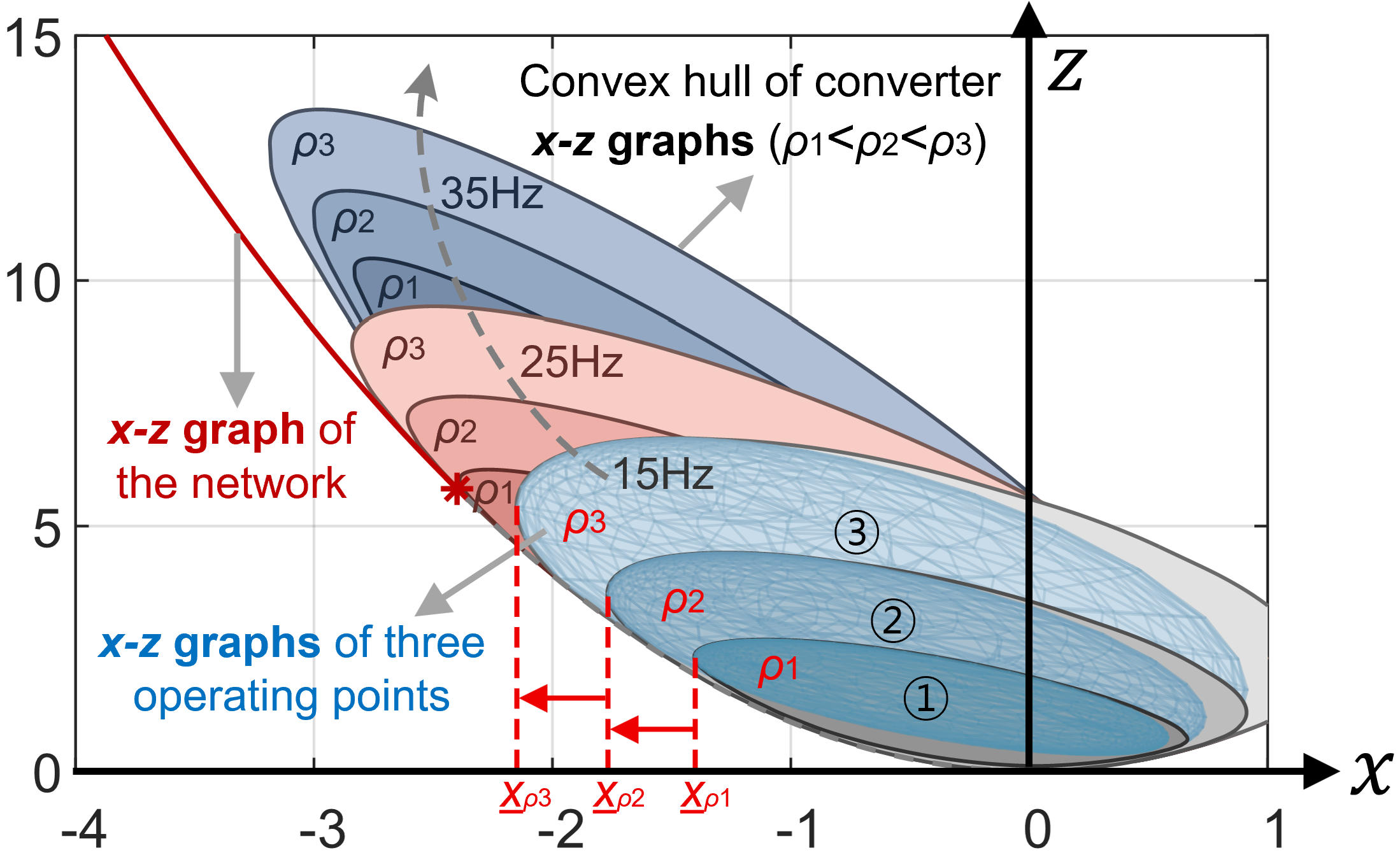}
    	\vspace{-2mm}
    	%\DeclareGraphicsExtensions.
        \caption{Illustration of the convex hulls of the converter's $x$-$z$ graphs under different $\mathcal U_i(\rho)$ with \(\rho_1<\rho_2<\rho_3\). The convex hulls are generated by 16 uniformly sampled points in $\Psi_i\in[0,2\pi)$, each paired with 5 samples of $V_i\in[0.9,1.1]\,\mathrm{p.u.}$ Three marked operating points $(P_i,Q_i,V_i)$ are \ding{172}~$(0.28,0.28,0.90)$, \ding{173} $(0.57,0.57,0.90)$, and \ding{174} $(0.85,0.85,0.90)$.}
    	\vspace{-4mm}
    	\label{Fig_PQPV_xz}
    \end{figure}

Proposition~\ref{prop:monotonicity} shows how the leftmost $x$-value of the converter's $x$-$z$ graph changes with the power injections $(P_i,Q_i)$. To be specific, the worst-case scenario occurs on the boundary of the power variation set where $(P_i,Q_i) \in \mathcal U_i$. Moreover, with a larger set, the left bound moves towards the left, indicating a smaller stability margin \(M_i(\mathcal O_i)\) or even instability, as illustrated in Fig.~\ref{Fig_PQPV_xz} where we use the same converter setting (PQ mode) as in Example~\ref{ex:single_converter}.
At a certain frequency, the convex hulls of the converter's \(x\)-\(z\) graphs are generated from uniformly sampled operating points on \(P^2+Q^2=\rho_k^2\) with \(\rho_1<\rho_2<\rho_3\). With the increase of \(\rho\), the left bound of the convex hull shifts
towards the left at each frequency, as reflected by \(\underline{x}_{\rho_3}<\underline{x}_{\rho_2}<\underline{x}_{\rho_1}\). This moves the converter's \(x\)-\(z\) graphs towards the network's \(x\)-\(z\) graph, resulting in a smaller \(M_i(\mathcal O_i)\). Hence, the larger-\(\rho\) boundary layers are more critical when searching for the worst-case operating points. For instance, if we know that each point on the boundary of a power variation set is stable, then all the sets that are contained in this set are stable, and there is no need to examine them. This motivates a radial boundary-to-interior scan. To illustrate the idea, for each \(0\le\rho\le\kappa_i\), we define
\begin{equation}
\begin{aligned}
\mathcal U_i(\rho)
&=
\bigl\{
\mathbf u_i\!=\!(r\cos\psi,r\sin\psi)\!:
0\le r\le \rho,\ \! \psi\in\!\Psi_i
\bigr\},\\
\mathcal O_i^{\partial}(\rho)
&=
\bigl\{
[\mathbf u_i,\ V_i]^\top:
\mathbf u_i\in\partial\mathcal U_i(\rho),\
V_i\in\mathcal V_i
\bigr\},
\end{aligned}
\label{eq:radial_boundary_operating_set}
\end{equation}
where \(\mathbf u_i=(P_i,Q_i)\) denotes the power-injection point,
\(\Psi_i=[-\pi,\pi)\) gives the full disk, and
\(\Psi_i=[0,\pi/2]\) corresponds to \(P_i,Q_i\ge0\).
The admissible voltage set is \(\mathcal V_i=[V_{i,\min},V_{i,\max}]\) in PQ mode and
\(\mathcal V_i=\{V_{{\rm ref},i}\}\) in PV mode.

The practical scan then proceeds from the outer boundary to the interior to efficiently find the DSOR boundary. Algorithm~\ref{alg:fast_margin} is first applied to the outermost boundary operating-point set \(\mathcal O_i^{\partial}(\kappa_i)\). If operating points with \(M_i(\mathcal O_i)=0\) are identified on this boundary, the scan is continued by evaluating the complete boundary sets \(\mathcal O_i^{\partial}(\rho)\) for a sequence of decreasing \(\rho\). In this way, the DSOR boundary can be found efficiently.

%========================

\section{Case Studies} \label{5}
This section validates our method by simulating a single-converter system and a real-world wind power system that contains 54 wind farms (each one modeled by a converter). 
% The single-converter case examines the DSOR characterization via geometric illustrations and time-domain simulations. The wind power system demonstrates the method's applicability to multi-converter systems with variable operating points and its capability to guide control modification.

\vspace{-2mm}

 \subsection{Case Studies of a Single-Converter System}
 
Based on the DSORs constructed in Example~\ref{ex:single_converter}, this subsection validates their effectiveness through $x$-$z$ graphs and SRGs at different operating points as well as time-domain simulations under changes of operating points.

Fig.~\ref{Fig_PQ_xz_SRG} plots the \(x\)-\(z\) graphs of the converter (PQ mode) and the network at two representative operating points \(a\) and \(c\) in Fig.~\ref{Fig_DSOR_DROR}~(b), along with their \(SRG_+\). At operating point \(a\), the converter's \(x\)-\(z\) graphs remain separated from the network's \(x\)-\(z\) graph in Fig.~\ref{Fig_PQ_xz_SRG}~(a), indicating a positive margin, so \(a\) lies inside the DSOR. Due to the bijective relationship between the \(x\)-\(z\) graph and \(SRG_+\), the corresponding \(SRG_+\) plots in Fig.~\ref{Fig_PQ_xz_SRG}~(b) also exhibit separation. At operating point \(c\), the converter and network \(x\)-\(z\) graphs intersect in Fig.~\ref{Fig_PQ_xz_SRG}~(c), so \(M_i(c)=0\) and \(c\) lies outside the DSOR. The corresponding \(SRG_+\) plots in Fig.~\ref{Fig_PQ_xz_SRG}~(d) also intersect. These plots illustrate the convexity of the \(x\)-\(z\) graphs, which makes them easier to evaluate than the generally nonconvex \(SRG_+\). Fig.~\ref{Fig_PQ_xz_SRG} focuses on a certain frequency range, and at the other frequencies, either the gain condition~\eqref{eq:DecGain} or the phase condition~\eqref{eq:DecPhase} is satisfied, similar to~\cite{huang2025geometric}, and thus the plots are omitted here.

     \begin{figure}[!t]
     \vspace{-3mm}
    	\centering
    	\includegraphics[width=0.98\linewidth]{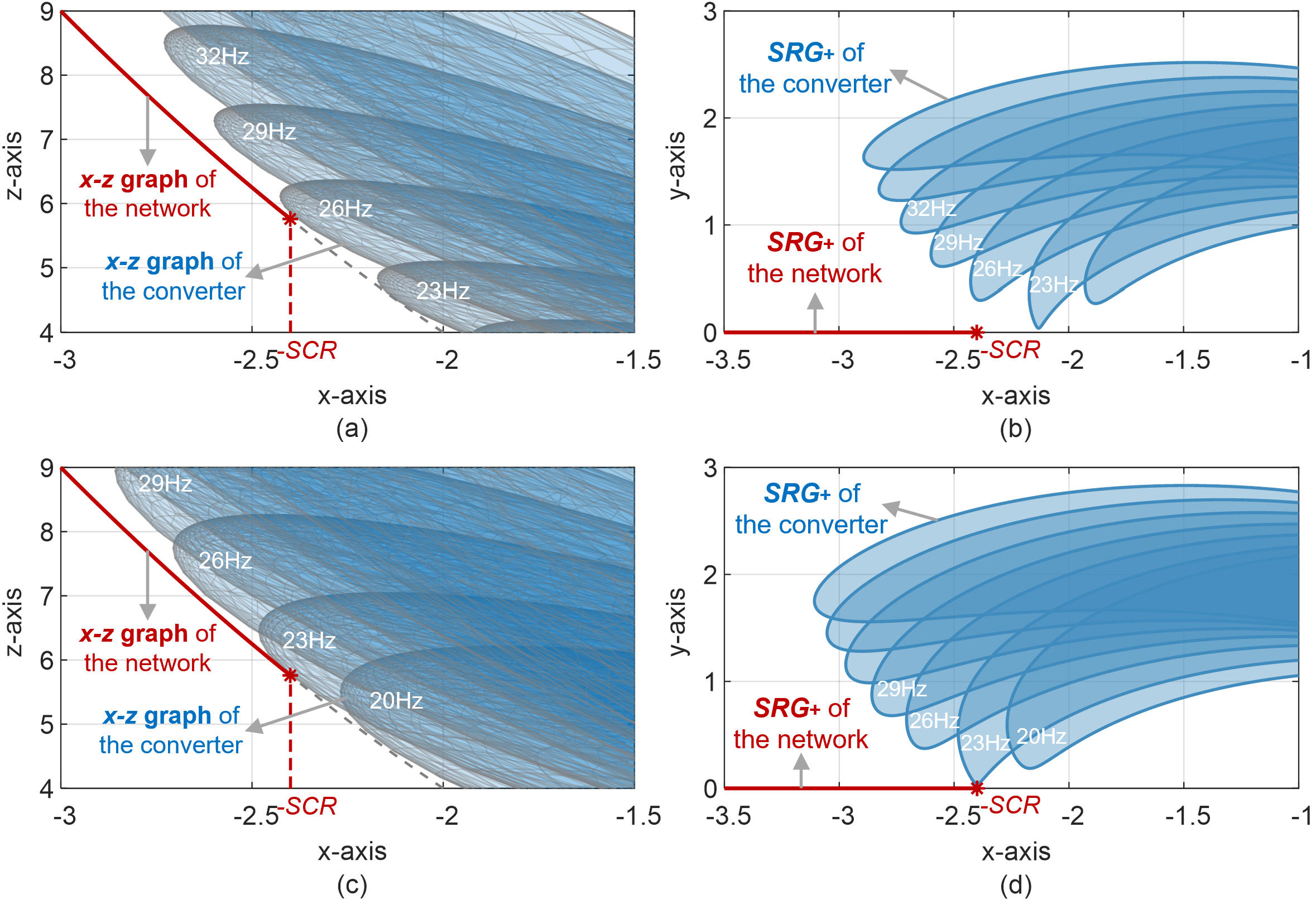}
    	\vspace{-2mm}
    	%\DeclareGraphicsExtensions.
        \caption{\(x\)-\(z\) graphs and \(SRG_+\) of the converter and the network at operating points \(a\) and \(c\) in Fig.~\ref{Fig_DSOR_DROR}(b) under PQ mode. (a) \(x\)-\(z\) graph and (b) \(SRG_+\) at operating point \(a\); (c) \(x\)-\(z\) graph and (d) \(SRG_+\) at operating point \(c\).}
    	\vspace{-2mm}
    	\label{Fig_PQ_xz_SRG}
    \end{figure}

     \begin{figure}[!t]
     \vspace{0mm}
    	\centering
    	\includegraphics[width=0.98\linewidth]{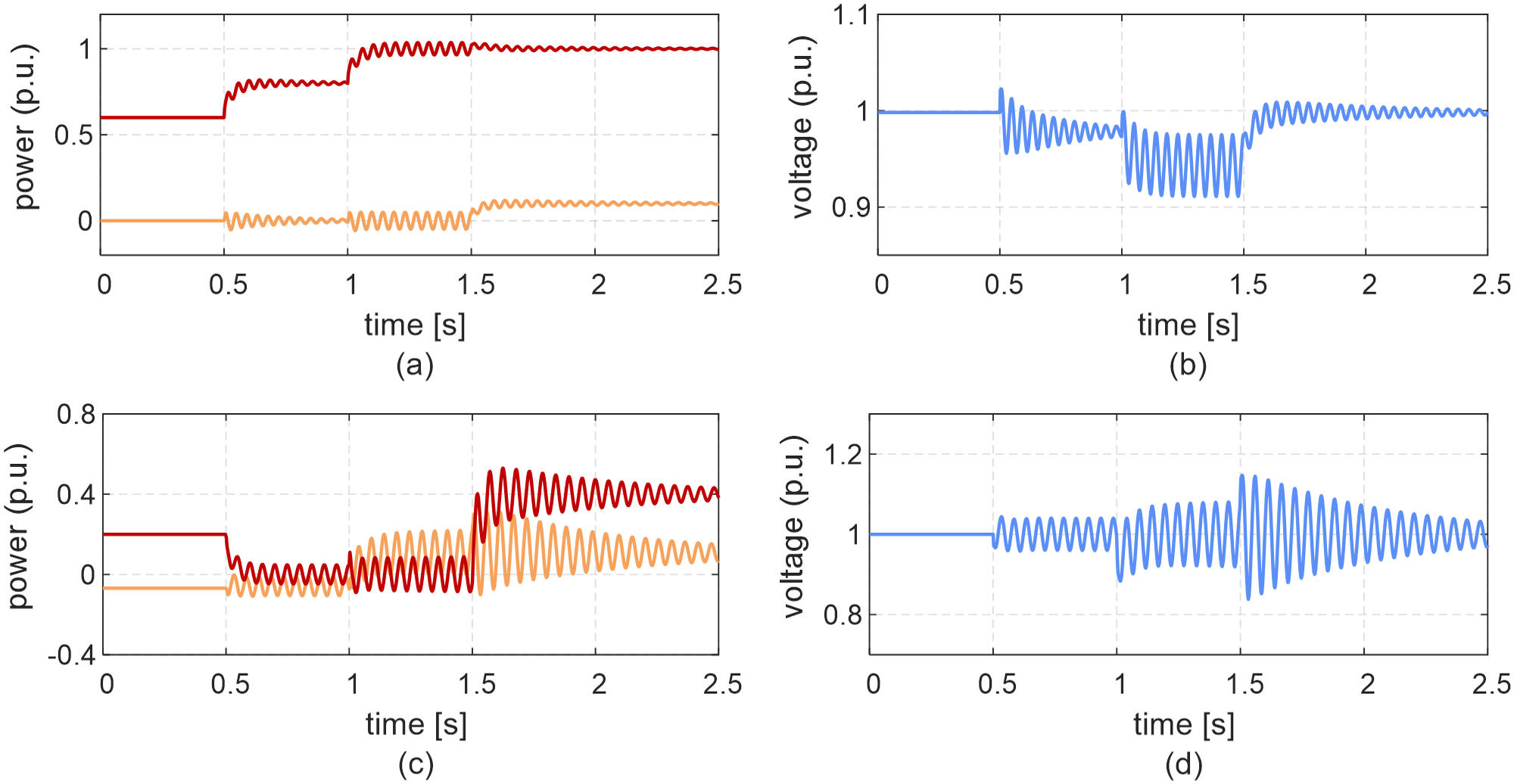}
    	\vspace{-2mm}
    	%\DeclareGraphicsExtensions.
        \caption{Time-domain responses of power and voltage of the single-converter system under PQ and PV control modes, with (a)-(b) under PQ mode and (c)-(d) under PV mode. \textcolor[HTML]{C00000}{\textbf{---}}: active power; \textcolor[HTML]{F7A25B}{\textbf{---}}: reactive power; \textcolor[HTML]{5B8FF7}{\textbf{---}}: voltage.}
    	\vspace{-4mm}
    	\label{Fig_PQPV_time_domain}
    \end{figure}

We next consider the changes of operating points, as shown in Fig.~\ref{Fig_DSOR_DROR}, and present the time-domain simulation results. Under PQ mode, the initial operating point is \(P=0.6\,\mathrm{p.u.}\) and \(Q=0\,\mathrm{p.u.}\), with the voltage \(V=0.998\,\mathrm{p.u.}\) determined by the power flow. As the active power increases to \(0.8\,\mathrm{p.u.}\) and then to \(1.0\,\mathrm{p.u.}\), the operating point moves from \(a\) to \(b\) and then to \(c\) in Fig.~\ref{Fig_DSOR_DROR}~(b), which leaves the DSOR and enters the risky operating region. By fixing \(P=1.0\,\mathrm{p.u.}\) and increasing the reactive power to \(0.2\,\mathrm{p.u.}\), the operating point further moves to \(d\), where it returns to the DSOR. Under PV mode, the initial operating point is \(P=0.2\,\mathrm{p.u.}\) and \(V=1.0\,\mathrm{p.u.}\), with \(Q=-0.068\,\mathrm{p.u.}\) determined by the power flow. As the active power decreases to \(0\,\mathrm{p.u.}\), the operating point moves outside the DSOR at point \(b\), shown in Fig.~\ref{Fig_DSOR_DROR}~(d). A voltage dip at the infinite bus further drives the operating point to \(c\), while increasing the active power to \(0.4\,\mathrm{p.u.}\) drives it to \(d\), where it returns to the DSOR. 
The corresponding time-domain plots of both PQ and PV cases are given in Fig.~\ref{Fig_PQPV_time_domain}, where the system shows sustained oscillations when the operating point leaves the DSOR, verifying the effectiveness of the DSOR analysis.

\subsection{Case Studies of a Wind Power System}

The effectiveness of the proposed method is further verified on the wind power system shown in Fig.~\ref{hami_topology}, which has five regions (A-E), including 54 wind farms operating in PQ mode. Each wind farm is modeled by a GFL converter whose control structure is shown in Fig.~\ref{common_topology}. The PLL bandwidths of the converters are set to \(108~\mathrm{rad/s}\) in Regions A and D, and \(28~\mathrm{rad/s}\) in Regions B, C, and E. All converters operate with zero reactive power setpoints, while the active power setpoints of the 54 converters are provided in the Supplementary Material. The detailed control and network parameters are also provided in the Supplementary Material.

    \begin{figure}[!tb]
     \vspace{-3mm}
    \centerline{\includegraphics[width=0.9\linewidth]{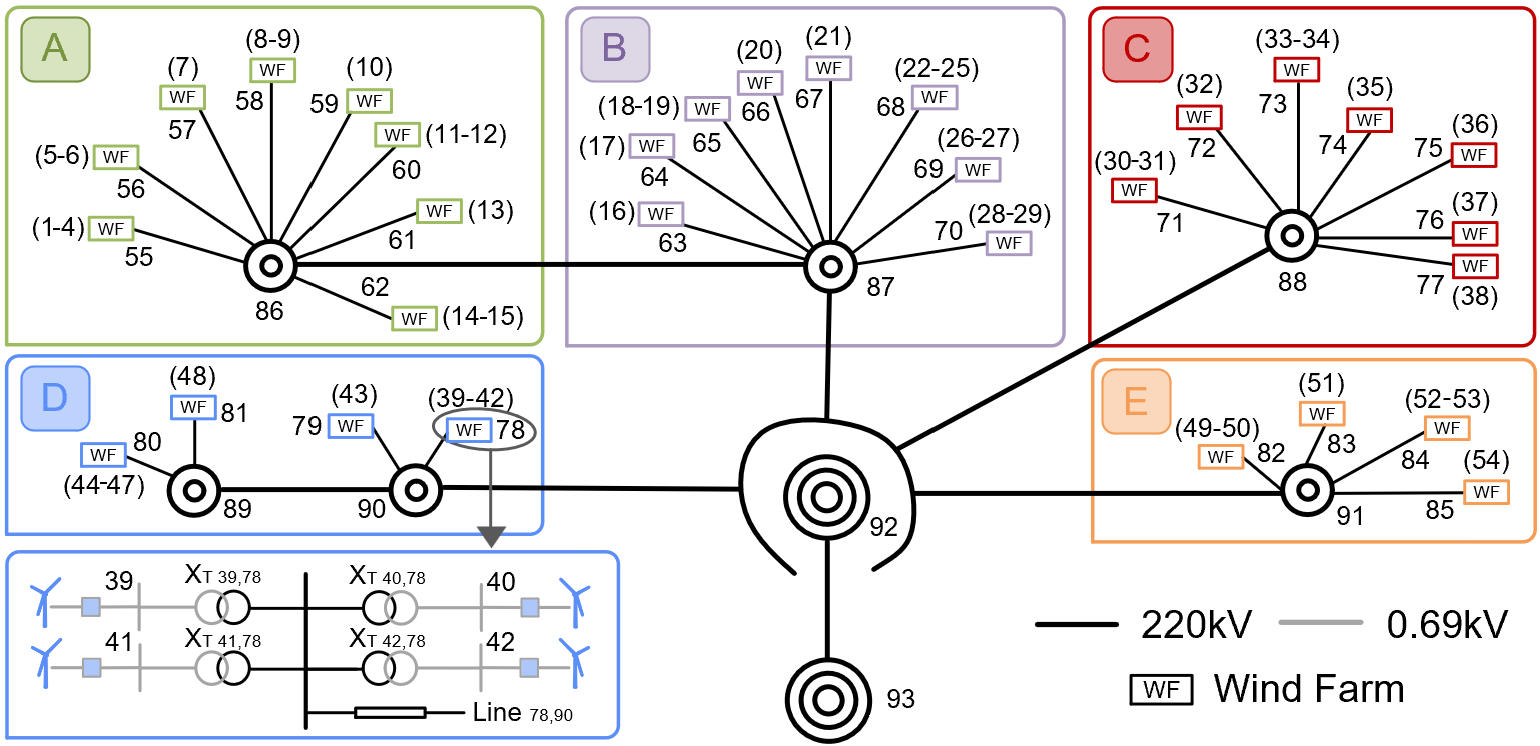}}
     \vspace{-2mm}
     \caption{Topology of a real-world wind power system in China.}
     \label{hami_topology}
     \vspace{-4mm}
    \end{figure}

The wind farms can be divided into two groups according to the PLL bandwidths: Group~1 includes Regions A and D, and Group~2 includes Regions B, C, and E. The converters in the same group can still have different operating points. Following the boundary-to-interior scan described in Section~\ref{worst case}, the convex hulls of converter \(x\)-\(z\) graphs are constructed on the outer boundary set \(\mathcal O_i^{\partial}(\kappa_i)\) in~\eqref{eq:radial_boundary_operating_set} with \(\kappa_i=1\). Since each wind farm operates in the \(P_i,Q_i\ge0\) quarter-sector, the power-injection boundary is uniformly sampled by \(45\) points in \(\Psi_i=[0,\pi/2]\), each paired with \(5\) voltage samples in \(V_i\in[0.9,1.1]\,\mathrm{p.u.}\) Fig.~\ref{hami_convex_xz}~(a) and (b) demonstrate the resulting convex hulls of the converters' \(x\)-\(z\) graphs, where the convex hulls of Group~2 remain separated from the network's \(x\)-\(z\) graph, whereas those of Group~1 intersect with the network's \(x\)-\(z\) graph. The zoom-in view in Fig.~\ref{hami_convex_xz}~(c) further shows that the intersection occurs at \(P=1.0\,\mathrm{p.u.}\) Fig.~\ref{hami_convex_xz}(e) and (f) show the corresponding \(SRG_+\) plots, which exhibit the same separation/intersection due to their bijective relationship with the \(x\)-\(z\) graphs. This intersection indicates that the separation condition in~\eqref{eq:DecXZ} is violated for Group~1, revealing a potential instability risk at some operating points.

To eliminate this risk, the PLL bandwidths of the converters in Group~1 are reduced to \(58~\mathrm{rad/s}\). After this modification, the convex hulls of the \(x\)-\(z\) graphs for Group~1 become separated from the network's \(x\)-\(z\) graphs, as shown in Fig.~\ref{hami_convex_xz}~(d). Together with the separation observed for Group~2, all the converters satisfy the decentralized \(x\)-\(z\) graph separation condition over their admissible operating regions. Hence, by Corollary~\ref{cor:DSOR}, the wind power system is certified to be stable for any admissible system operating point \(\mathcal O\).

\begin{figure}[!t]
    \vspace{-3mm}
	\centerline{\includegraphics[width=0.99\linewidth]{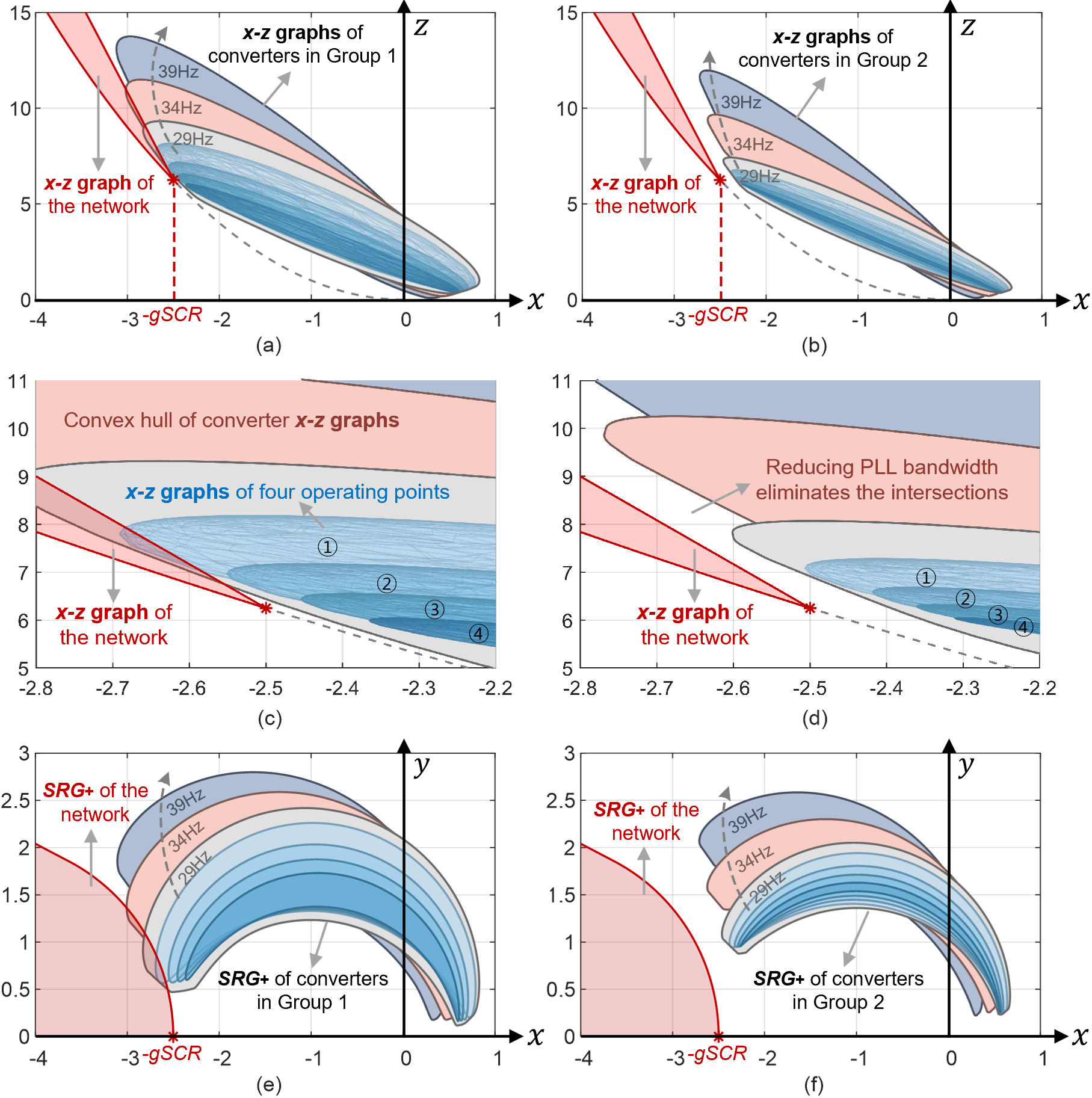}}
	\vspace{-2mm}
	%\DeclareGraphicsExtensions.
    \caption{Convex hulls of the converters' $x$-$z$ graphs; the network's $x$-$z$ graphs; and their corresponding SRGs. (a) \(x\)-\(z\) graphs for Group~1; (b) \(x\)-\(z\) graphs for Group~2; (c) The zoom-in view of (a); (d) The \(x\)-\(z\) graphs for Group~1 after PLL modification; (e) The \(SRG_+\) corresponding to (a); (f) The \(SRG_+\) corresponding to (b). Four example operating points \((P_i,Q_i,V_i)\) are: \ding{172} $(1.0,0,1.0)$, \ding{173} $(0.6,0,1.0)$, \ding{174} $(0.3,0,1.0)$, and \ding{175} $(0,0,1.0)$.}
	\vspace{-4mm}
	\label{hami_convex_xz}
\end{figure}

Time-domain simulations are then conducted to validate the above analysis. Initially, all the converters in Group~1 operate with \(P\leq0.6\,\mathrm{p.u.}\), corresponding to the stable operating points shown in Fig.~\ref{hami_convex_xz}~(c). Then, the active power of Converters 44-47 in Group~1 is increased from \(0.6\) to \(1.0\,\mathrm{p.u.}\) at \(t=0.2\,\mathrm{s}\), followed by a power increase of Converters 1-4 from \(0.5\) to \(0.9\,\mathrm{p.u.}\) at \(t=1\,\mathrm{s}\). Before the control modification, these changes result in the intersection in Fig.~\ref{hami_convex_xz}~(c). Fig.~\ref{hami_time_domain} shows that the original system exhibits sustained oscillations after \(0.2\,\mathrm{s}\) and instability after \(1\,\mathrm{s}\), while the modified system remains stable under the same operating-point changes.

Finally, to further show the decentralized stability margins, the frequency-wise distance \(d_i(f,\mathcal O_i)\) in Algorithm~\ref{alg:fast_margin} is evaluated within \(0\)-\(100~\mathrm{Hz}\) for each converter after the two power steps, as shown in Fig.~\ref{frequency_wise_distance}. Before the control modification, \(d_i(f,\mathcal O_i)\) reaches zero for Converters 1-4 and 44-47 within \(25\)-\(35~\mathrm{Hz}\), indicating that their decentralized stability margins are zero at these operating points. After the modification, we have \(d_i(f,\mathcal O_i)>0\) and \(M_i(\mathcal O_i)>0\) for all the converters. This again confirms the \(x\)-\(z\) graph separation after the modification and is consistent with the time-domain results.

    \begin{figure}[!tb]
     \vspace{0mm}
    \centerline{\includegraphics[width=0.99\linewidth]{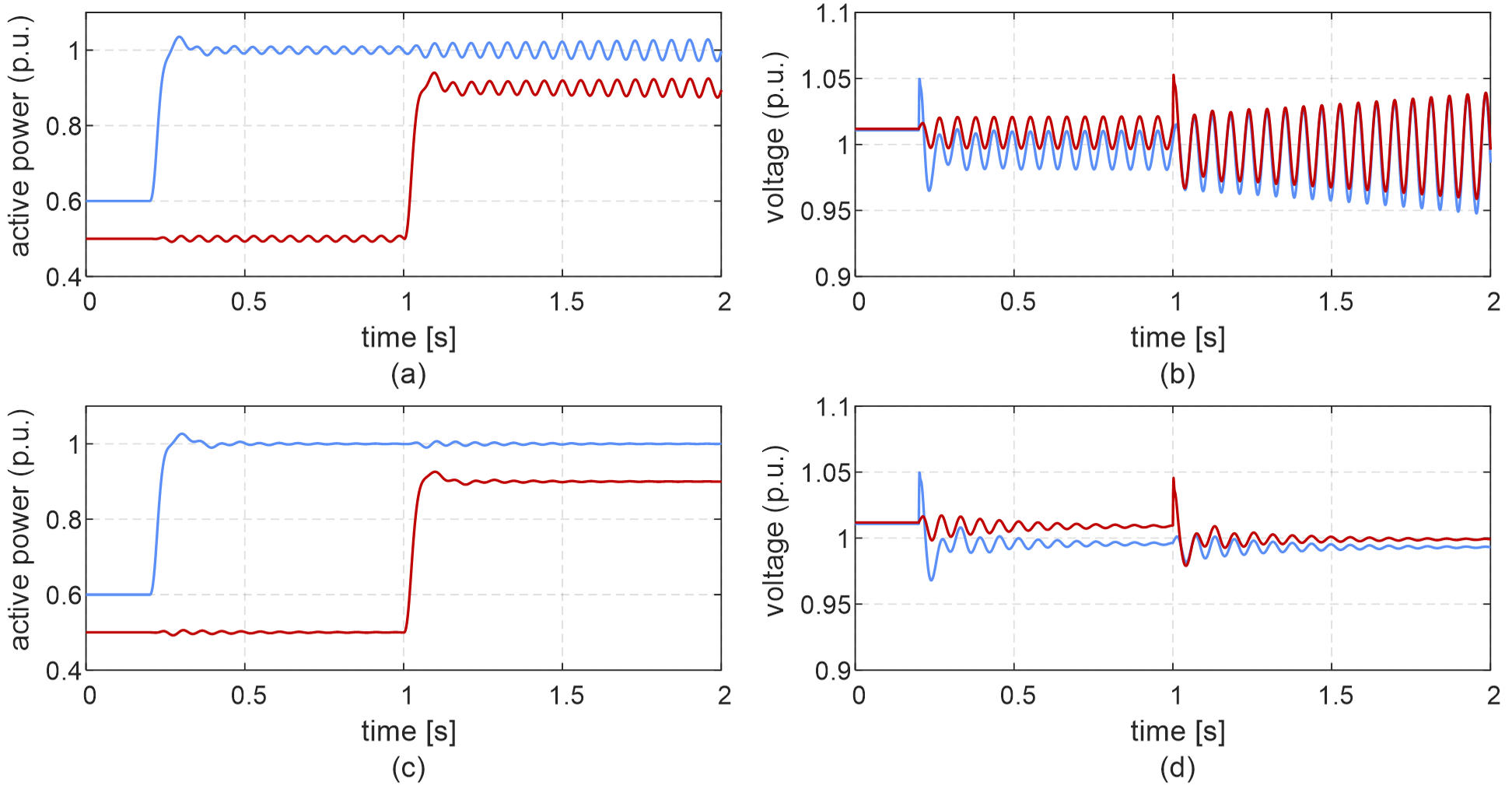}}
     \vspace{-3mm}
     \caption{Time-domain responses of active power and voltage of Converters 1 and 44 in the wind power system, with (a)-(b) before PLL modification and (c)-(d) after PLL modification. \textcolor[HTML]{C00000}{\textbf{---}}: Converter~1; \textcolor[HTML]{5B8FF7}{\textbf{---}}: Converter~44.}
     \label{hami_time_domain}
     \vspace{-2mm}
    \end{figure}

    \begin{figure}[!t]
    \vspace{0mm}
	\centerline{\includegraphics[width=0.99\linewidth]{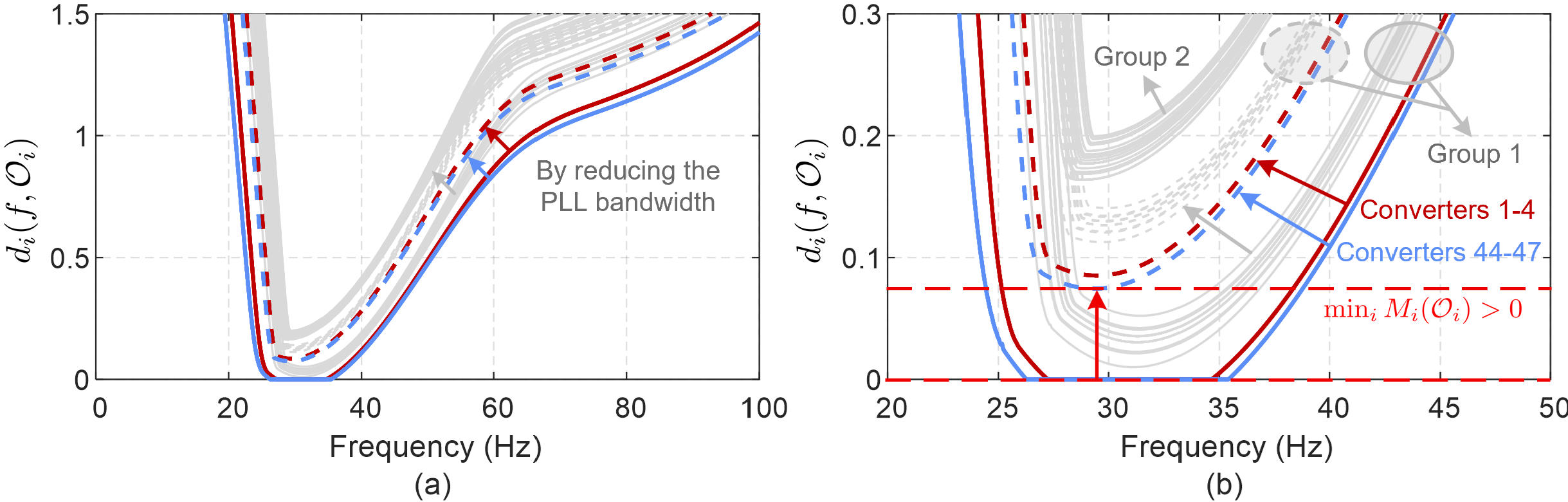}}
	\vspace{-3mm}
    \caption{Frequency-wise distance \(d_i(f,\mathcal O_i)\) of the wind power system after two active-power steps. (a) All 54 converters. (b) Zoom-in view showing that \(M_i(\mathcal O_i)>0\) for all the converters after the PLL modification.}
	\vspace{-4mm}
	\label{frequency_wise_distance}
    \end{figure}

 \section{Conclusions} \label{6}
This paper proposes a geometric decentralized stability certificate for PE-dominated power systems covering variable operating points. The certificate enables modular and scalable stability analysis of heterogeneous multi-converter systems without enumerating the high-dimensional system operating points. A quantitative stability margin based on $x$-$z$ graphs is proposed to construct the decentralized stability operating regions and guide the search for worst-case operating conditions. An efficient algorithm is further proposed for fast evaluation of stability margins and construction of the certified regions. Case studies on a single-converter system and a 54-converter wind power system validate the proposed method.

%TODO REFERANCE
\bibliographystyle{IEEEtran}

%\appendices
%\numberwithin{equation}{section}
%\vspace{0mm}
%\section{Parameters of the Test Systems}
%\label{Appendix C}

\bibliography{REFERANCE}

\newpage

\appendices
\numberwithin{equation}{section}
\vspace{0mm}

\section{Proof of Proposition IV.3}
\label{Appendix A}

\textit{Proof:}
For any fixed \((\omega,V_i)\), let \(\mathbf u_i=(P_i,Q_i)\) and write
\(\mathbf H_i(\omega,\mathbf u_i,V_i):=\mathbf H_i(\omega,\mathcal O_i)\).
From (28) and (30), we obtain
\begin{equation}
\tag{A.1}
\underline{x}_i(\omega,\mathbf u_i,V_i)
=
\min_{\|v\|_2=1}
v^*
\bigl(
\mathbf H_{i,0}
+
P_i\mathbf H_{i,P}
+
Q_i\mathbf H_{i,Q}
\bigr)
v ,
\end{equation}
where \(\mathbf H_{i,0}\), \(\mathbf H_{i,P}\), and \(\mathbf H_{i,Q}\)
are independent of \(\mathbf u_i\), with the fixed \((\omega,V_i)\)
omitted. Thus, for each fixed unit vector \(v\), the expression inside
the minimization is affine in \(\mathbf u_i\).

We first prove the concavity. Take arbitrary power-injection points
\(\mathbf u_i^{(1)}\) and \(\mathbf u_i^{(2)}\), and any
\(\lambda\in[0,1]\). Let
\begin{equation}
\tag{A.2}
\mathbf u_i^{(\lambda)}
=
\lambda \mathbf u_i^{(1)}
+
(1-\lambda)\mathbf u_i^{(2)} .
\end{equation}

For compactness, denote
\(\mathbf H_i^{(k)}=\mathbf H_i(\omega,\mathbf u_i^{(k)},V_i)\),
\(k=1,2\). By the affine structure of \(\mathbf H_i\) in (A.1), we have
\begin{equation}
\tag{A.4}
\begin{aligned}
&\underline{x}_i(\omega,\mathbf u_i^{(\lambda)},V_i) \\
&=
\min_{\|v\|_2=1}
v^*
\left[
\lambda \mathbf H_i^{(1)}
+
(1-\lambda)\mathbf H_i^{(2)}
\right]v \\
&=
\min_{\|v\|_2=1}
\Bigl\{
\lambda v^*\mathbf H_i^{(1)}v
+
(1-\lambda)v^*\mathbf H_i^{(2)}v
\Bigr\} \\
&\ge
\lambda
\min_{\|v\|_2=1}v^*\mathbf H_i^{(1)}v
+
(1-\lambda)
\min_{\|v\|_2=1}v^*\mathbf H_i^{(2)}v \\
&=
\lambda \underline{x}_i(\omega,\mathbf u_i^{(1)},V_i)
+
(1-\lambda)\underline{x}_i(\omega,\mathbf u_i^{(2)},V_i).
\end{aligned}
\end{equation}
The inequality follows since each term is bounded below by its own
minimum and \(\lambda,1-\lambda\ge0\). Hence,
\(\underline{x}_i(\omega,\mathcal O_i)\) is concave with respect to
\(\mathbf u_i=(P_i,Q_i)\) for any fixed \((\omega,V_i)\).

We next prove that the minimum is attained on the boundary. Write \(f_i(\mathbf u_i):=\underline{x}_i(\omega,\mathbf u_i,V_i)\). Since
\(f_i\) is continuous and \(\mathcal U_i\) is nonempty and compact, there
exists \(\mathbf u_i^\star\in\mathcal U_i\) such that
\begin{equation}
\tag{A.6}
f_i(\mathbf u_i^\star)
=
\min_{\mathbf u_i\in\mathcal U_i} f_i(\mathbf u_i).
\end{equation}

If \(\mathbf u_i^\star\in\partial\mathcal U_i\), then (31) follows
directly. Otherwise, \(\mathbf u_i^\star\in\operatorname{int}(\mathcal U_i)\).
Consider a line passing through \(\mathbf u_i^\star\), and let
\([\mathbf u_i^-,\mathbf u_i^+]\) be the connected component of its
intersection with \(\mathcal U_i\) that contains \(\mathbf u_i^\star\).
Then \(\mathbf u_i^-,\mathbf u_i^+\in\partial\mathcal U_i\), and
\begin{equation}
\tag{A.7}
\mathbf u_i^\star
=
\alpha\mathbf u_i^-
+
(1-\alpha)\mathbf u_i^+
\end{equation}
for some \(\alpha\in(0,1)\). By the concavity of \(f_i\),
\begin{equation}
\tag{A.8}
f_i(\mathbf u_i^\star)
\ge
\alpha f_i(\mathbf u_i^-)
+
(1-\alpha)f_i(\mathbf u_i^+).
\end{equation}
Hence,
\(f_i(\mathbf u_i^\star)\ge f_i(\mathbf u_i^-)\) or
\(f_i(\mathbf u_i^\star)\ge f_i(\mathbf u_i^+)\) holds. Since
\(\mathbf u_i^\star\) is a minimizer, equality holds for at least one of
\(\mathbf u_i^-\) and \(\mathbf u_i^+\). Therefore, a minimizer lies on
\(\partial\mathcal U_i\). This proves (31), i.e.,
\begin{equation}
\tag{A.9}
\min_{\mathbf u_i\in\mathcal U_i} f_i(\mathbf u_i)
=
\min_{\mathbf u_i\in\partial\mathcal U_i} f_i(\mathbf u_i).
\end{equation}

Finally, for \(0\le\rho_1<\rho_2\), the nesting condition gives
\(\mathcal U_i(\rho_1)\subseteq\mathcal U_i(\rho_2)\). Hence, the minimum
over \(\mathcal U_i(\rho_2)\) is no larger than that over
\(\mathcal U_i(\rho_1)\). Combining this with (A.9) gives
\begin{equation}
\tag{A.10}
\min_{\mathbf u_i\in\partial\mathcal U_i(\rho_2)}
f_i(\mathbf u_i)
\le
\min_{\mathbf u_i\in\partial\mathcal U_i(\rho_1)}
f_i(\mathbf u_i),
\end{equation}
which is equivalent to (32). This completes the proof.

\section{Parameters of the Test Systems}
\label{Appendix B}
See Table~\ref{table:single converter} and Table~\ref{practical sys_parameters}.
\renewcommand{\arraystretch}{1.25}
\renewcommand{\thetable}{B.\Roman{table}}  
\setcounter{table}{0}

\begin{table}[h]
	\scriptsize
	\centering
	\vspace{-0mm}
	\caption{Parameters of the Single-Converter System}
    \vspace{-1mm}
	\begin{tabular}{|l|l|l|l|}
		\hline
		\multicolumn{4}{|c|}{\bf Base Values for Per-unit Calculation} \\
		\hline
		\multicolumn{1}{|l}{$f_{\rm base} = 50{\rm Hz}$}			
		&\multicolumn{1}{l}{$\omega_{\rm base} = 2\pi f_{\rm base}$}				
        &\multicolumn{1}{l}{$U_{\rm base} = 380{\rm V}$}			
		&\multicolumn{1}{l|}{$S_{\rm base} = 1{\rm kVA}$}								\\
		\hline
		\multicolumn{4}{|c|}{\bf Main Parameters of the PQ Test System (Per-unit Values)}				\\
		\hline
		\multicolumn{2}{|l}{Converter-side inductors: $L_{\rm F} = 0.05$}			
		&\multicolumn{2}{l|}{\textit{LCL} capacitors: $C_{\rm F} = 0.06$}								\\
		\multicolumn{2}{|l}{Grid-side inductors: $L_{\rm g} = 0.417$}			
		&\multicolumn{2}{l|}{Grid-side resistors: $R_{\rm g} = 0.0417$}									\\
        \multicolumn{2}{|l}{PI gains of the current loop: $0.3, 10$}			
		&\multicolumn{2}{l|}{PI gains of the outer loop: $0.5, 40$}	\\
            \multicolumn{2}{|l}{PI gains of the PLL: $88, 3892$}			
		&\multicolumn{2}{l|}{VF time constant: $T_{\rm VF} = 0.005$}	\\	
		\hline

		\multicolumn{4}{|c|}{\bf Main Parameters of the PV Test System (Per-unit Values)}							\\
		\hline
		\multicolumn{2}{|l}{Converter-side inductors: $L_{\rm F} = 0.05$}			
		&\multicolumn{2}{l|}{\textit{LCL} capacitors: $C_{\rm F} = 0.06$}								\\
		\multicolumn{2}{|l}{Grid-side inductors: $L_{\rm g} = 0.556$}			
		&\multicolumn{2}{l|}{Grid-side resistors: $R_{\rm g} = 0.0556$}									\\
        \multicolumn{2}{|l}{PI gains of the current loop: $0.3, 10$}			
		&\multicolumn{2}{l|}{PI gains of the outer loop: $0.5, 40$}	\\
            \multicolumn{2}{|l}{PI gains of the PLL: $43, 931$}			
		&\multicolumn{2}{l|}{VF time constant: $T_{\rm VF} = 0.005$}	\\	
		\hline
        \multicolumn{4}{|c|}{\textbf{Algorithm~1 Parameters for the Test Systems}} \\
        \hline
		\multicolumn{4}{|l|}{$f_{\min}=0$ Hz~~~~$f_{\max}=100$ Hz~~~~$\Delta f=2$ Hz~~~~$N_{\text{l}}=20$~~~~$N_{\text{g}}=40$} \\
		\multicolumn{4}{|l|}{$x_L = -2/L_{\rm g}$~~~$x_R = -1/L_{\rm g}$~~~~~~$\varepsilon=5\times10^{-4}$~~~~~~$\varepsilon_{\text{op}}=5\times10^{-2}$}
		\\
		\hline
	\end{tabular}
    \vspace{-3mm}
	\label{table:single converter}
\end{table}

		%\multicolumn{4}{|l|}{$N_t=20$~~~~~~$N_x=20$~~~~~~$\varepsilon=5\times10^{-4}$~~~~~~$\varepsilon_{\text{op}}=5\times10^{-2}$}
		%\\
        %\multicolumn{4}{|l|}{$x_L = -2/L_{\rm g}$~~~$x_R = -1/L_{\rm g}$~~~$\eta=10^{-8}$~~~$N_{\max}=30$~~~~$\lambda=10^{-6}$}

\begin{table}[H]
	\scriptsize
	\centering
	\vspace{0mm}
    \setlength{\tabcolsep}{4pt}
	\caption{Parameters of the Real-World Wind Power System}
	\begin{tabular}{|llll|}
		\hline
        \multicolumn{4}{|c|}{\bf Base Values for Per-unit Calculation}				\\
        \hline
        $f_{\rm base} = 50{\rm Hz}$		&	$\omega_{\rm base} = 2\pi f_{\rm base}$ &	$U_{\rm base} = 220{\rm kV}$	&	$S_{\rm base} = 1{\rm MVA}$		\\
        \hline
		\multicolumn{4}{|c|}{\bf Main Parameters of the Converters (p.u.)}	\\
        \hline
		\multicolumn{2}{|l}{Converter-side inductors: $L_{\rm F} = 0.05$}			
		&\multicolumn{2}{l|}{\textit{LCL} capacitors: $C_{\rm F} = 0.05$}								\\
        \multicolumn{2}{|l}{PI gains of the current loop: $0.3, 10$}			
		&\multicolumn{2}{l|}{PI gains of the outer loop: $0.5, 40$}	\\
            \multicolumn{2}{|l}{VF time constant: $T_{\rm VF} = 0.005$}	
        &\multicolumn{2}{l|}{Converter rated capacity: $S_N=1.0$}	\\	
		\hline
        \multicolumn{4}{|c|}{\bf Active Power Setpoints of the Converters~(p.u.)} \\	
		\hline
    $P_{1} = 0.50$ & $P_{2} = 0.50$ & $P_{3} = 0.50$ & $P_{4} = 0.50$ \\
    $P_{5} = 0.41$ & $P_{6} = 0.41$ & $P_{7} = 0.51$ & $P_{8} = 0.45$ \\
    $P_{9} = 0.45$ & $P_{10} = 0.36$ & $P_{11} = 0.37$ & $P_{12} = 0.37$ \\
    $P_{13} = 0.52$ & $P_{14} = 0.45$ & $P_{15} = 0.45$ & $P_{16} = 0.77$ \\
    $P_{17} = 0.42$ & $P_{18} = 0.63$ & $P_{19} = 0.63$ & $P_{20} = 0.36$ \\
    $P_{21} = 0.84$ & $P_{22} = 0.71$ & $P_{23} = 0.71$ & $P_{24} = 0.71$ \\
    $P_{25} = 0.71$ & $P_{26} = 0.42$ & $P_{27} = 0.42$ & $P_{28} = 0.62$ \\
    $P_{29} = 0.62$ & $P_{30} = 0.91$ & $P_{31} = 0.91$ & $P_{32} = 0.59$ \\
    $P_{33} = 0.82$ & $P_{34} = 0.82$ & $P_{35} = 0.33$ & $P_{36} = 0.42$ \\
    $P_{37} = 0.67$ & $P_{38} = 0.54$ & $P_{39} = 0.41$ & $P_{40} = 0.41$ \\
    $P_{41} = 0.41$ & $P_{42} = 0.41$ & $P_{43} = 0.28$ & $P_{44} = 0.60$ \\
    $P_{45} = 0.60$ & $P_{46} = 0.60$ & $P_{47} = 0.60$ & $P_{48} = 0.60$ \\
    $P_{49} = 0.60$ & $P_{50} = 0.60$ & $P_{51} = 0.32$ & $P_{52} = 0.45$ \\
    $P_{53} = 0.45$ & $P_{54} = 0.83$ & & \\
        \hline
        \multicolumn{4}{|c|}{\bf Network Parameters of the Practical Wind Power System~($\times 10^{-4}$ p.u.)}				\\	
		\hline
    $X_{55,86} = 1.58$ & $X_{56,86} = 6.55$ & $X_{57,86} = 15.17$ & $X_{58,86} = 21.22$ \\
    $X_{59,86} = 18.32$ & $X_{60,86} = 12.77$ & $X_{61,86} = 10.94$ & $X_{62,86} = 1.58$  \\
    $X_{63,87} = 6.80$ & $X_{64,87} = 20.14$ & $X_{65,87} = 15.01$ & $X_{66,87} = 12.10$ \\
    $X_{67,87} = 9.28$ & $X_{68,87} = 2.24$ & $X_{69,87} = 0.99$ & $X_{70,87} = 0.10$ \\
    $X_{71,88} = 16.33$ & $X_{72,88} = 14.09$ & $X_{73,88} = 9.82$ & $X_{74,88} = 8.42$ \\
    $X_{75,88} = 5.23$ & $X_{76,88} = 8.42$ & $X_{77,88} = 5.23$ & $X_{78,90} = 1.21$  \\
    $X_{79,90} = 14.09$ & $X_{80,89} = 1.21$ & $X_{81,89} = 4.23$ & $X_{82,91} = 9.82$ \\
    $X_{83,91} = 8.42$ & $X_{84,91} = 9.82$ & $X_{85,91} = 8.42$ & $X_{86,87} = 6.70$  \\
    $X_{87,92} = 20.13$ & $X_{88,92} = 27.96$ & $X_{89,90} = 44.64$ & $X_{90,92} = 50.33$ \\
    $X_{91,92} = 5.59$ & $X_{92,93} = 34.77$ & $X_{\text{T}} = 1500.00$ &  \\
        \hline
	\end{tabular}	
    \vspace{0mm}
	\label{practical sys_parameters}
\end{table}

\end{document}